\begin{document}

\author{
Serge Burckel
\footnote{ERMIT, Universit\'e de La R\'eunion, France}
\addtocounter{footnote}{2}
\and
Emeric Gioan
\footnote{LIRMM, CNRS,  France}
\addtocounter{footnote}{-2}
\and
Emmanuel Thom\'e
\footnote{LORIA, INRIA,  France}
}


  
  \title
{Computation with No Memory, \\ 
%
and Rearrangeable 
Multicast Networks%
%
%
}

\date{\footnotesize  \vspace{0.5cm} february 2014 preprint\\ to appear in Discrete Mathematics and Theoretical Computer Science}

  

 
  
 

%
%
%



\newenvironment{proof}{\noindent\textit{Proof: }}{{\hfill $\Box$}}
\newtheorem{lemma}{Lemma}
\newtheorem{theorem}[lemma]{Theorem}
\newtheorem{proposition}[lemma]{Proposition}
\newtheorem{corollary}[lemma]{Corollary}
\newtheorem{definition}[lemma]{Definition}
\newtheorem{example}[lemma]{Example}
\newtheorem{remark}[lemma]{Remark}
\newtheorem{open problem}{Open problem}

\def\bR{\ensuremath{\mathbb{R}}}
\def\bZ{\ensuremath{\mathbb{Z}}}
\def\bN{\ensuremath{\mathbb{N}}}

\renewcommand{\arraystretch}{0.7}
\renewcommand{\tabcolsep}{1pt}

\def\square{\hbox{\vrule\vbox{\hrule\phantom{o}\hrule}\vrule}}
\def\Min{\hbox{\rm Min\ }}

\def\gl{|}

\def\f{\psi}

\def\Bsn{B_{s,n}}
\def\B{B}
\def\R{\B^{-1}}

\def\trio{{$P$-factorisation }}

\font\ptirm=cmr10 scaled 700


\maketitle

\begin{abstract}
We investigate the computation of 
mappings from a set
$S^n$ to itself with {\it in situ programs}, that is using no extra
variables than the input, and performing modifications of one component at a~time, hence using no extra memory. 
In this paper, we survey this problem introduced in previous papers by the authors, we detail its close relation with rearrangeable multicast networks, and we provide new results for both viewpoints.

A bijective mapping can be computed by $2n-1$ component modifications, that is by a program of length  $2n-1$,
a result equivalent to the 
rearrangeability of the concatenation of two reversed butterfly networks.
For a general arbitrary mapping, we give two methods to build a program with maximal length
$4n-3$. 
Equivalently, this yields  rearrangeable 
multicast routing methods for
the network formed by four successive butterflies with alternating reversions.
The first method is available for any set $S$ and practically equivalent to a 
known method in network theory. The second method, a refinement of the first, described when $|S|$ is a power of~$2$, is new and allows more~flexibility than the known method.

For a linear mapping, when $S$ is any field, or a quotient of
an Euclidean domain (e.g.\ $\bZ/s\bZ$ for any integer $s$),
we build a program with maximal length 
$2n-1$. In this case the assignments are
also linear, thereby particularly efficient from the algorithmic viewpoint, 
and giving moreover directly a program for the inverse when it exists.
This yields also a new result on matrix decompositions, and a new result on the multicast properties of two successive reversed butterflies. Results of this flavour were known only for the boolean field $\bZ/2\bZ$.

\smallskip

\noindent {\bf Keywords: }{
mapping computation, memory optimization,
multistage interconnection network,
 multicast rearrangeability, 
  butterfly,
 bijective mapping, 
boolean mapping,
combinatorial logic,
 linear mapping, 
modular arithmetic,
 matrix decomposition}
\end{abstract}

\section{Introduction}
\label{sec:intro}

The mathematical definition of a mapping $E:S^n\rightarrow S^n$ can be thought of 
as the parallel computation of $n$ assignment mappings $S^n\rightarrow S$ performing the mapping $E$, either by modifying at the same time the $n$ component variables,
or mapping the $n$ input component variables onto $n$ separate output component variables.
If one wants to compute sequentially the mapping $E$ by modifying the components one by one and using no other memory than the input variables whose modified values overwrite the initial values, one necessarily needs to transform the $n$ mappings $S^n\rightarrow S$ in a suitable way. We call {\it in situ computation} this way of computing a mapping, and it turns out that it is always possible with a number of assignments linear with respect to $n$ and a small factor depending on the mapping type.
%

%
The idea of developing in situ computation came from a natural computation viewpoint:
transformation of an entity or structure is made inside this entity or structure, meaning with no extra space.
As a preliminary example, consider the mapping $E:S^2\rightarrow S^2$ defined by $E(x_1,x_2)=(x_2,x_1)$
consisting in the exchange of two variables for a group $S$. 
A basic program computing $E$ is: $x':=x_1$; $x_1:=x_2$; $x_2:=x'$.
An in situ program for $E$ avoids the use of the extra variable $x'$, with the successive assignments $x_1:=x_1 + x_2$;
$x_2:=x_1 - x_2$;
$x_1:=x_1 - x_2$.
%
In situ computation can be seen as a far reaching generalization of this classical computational trick. See Section \ref{sec:insitu} for a formal definition.

The problem of building in situ programs has already been introduced and considered under equivalent terms in 
\cite{Bu96}\cite{BuMo00}\cite{BuMo04}\cite{BuMo04bis}\cite{BuGi08}\cite{BuGiTh09}.
%
In the first papers, it had been proved
that in situ computations are always possible \cite{Bu96},
 that three types of assignments are sufficient to perform this kind of computations \cite{BuMo00},
that the length of in situ computations of mappings on $\{0,1\}^n$ is bounded by $n^2$ \cite{BuMo04},
and that any linear mapping on $\{0,1\}^n$ is
computed with $2n-1$ linear assignments \cite{BuMo04bis}. 
It turned out that, though this had not been noticed in those papers, this problem has close relations with the problem of finding
rearrangeable (non-blocking) multicast routing methods for multistage interconnection networks obtained by concatenations of butterfly networks  (see Section \ref{sec:MIN} for a formal definition).
Also, several existence results on in situ programs can be deduced from results in the network field.
%
%
This relation has been partially presented in \cite{BuGiTh09}, which proposed also improved bounds for mappings of various types on more general sets than the boolean set, 
and which can be considered as a preliminary conference version of the present paper 
(with weaker results, fewer detailed constructions,  fewer references and  fewer illustrations).  
%
Let us also mention \cite{BuGi08} which presented the subject to an electronics oriented audience for the sake of possible applications,  \cite{De11} which presented the subject to  non-specialists and general public,
and
\cite{Ga1} 
which gives some results in the continuation of~ours.

%
%
%
%

In the present paper, we survey the relation between in situ computations and multicast rearrangeable networks, through precise results and historical references.
Also, we recall that a bijective mapping can be computed by a program of length $2n-1$;
we give, for a general arbitrary mapping, two methods to build a program with maximal length
$4n-3$, one of which is equivalent to a known method in network theory (Section \ref{sec:general}), one of which    available on the boolean set is new and more flexible (Section \ref{sec:boolean});
and we build, for a linear mapping of a rather general kind, a program with maximal length $2n-1$ (Section \ref{sec:linear}). Moreover, we end each  main section with an open problem. Our techniques use combinatorics and modular arithmetic.
Finally,  our aim is to give a precise and illustrated survey on this subject, for both viewpoints (computations and networks), so as to sum up old and new available results in a unified 
appropriate 
framework. 

\bigskip

Let us first detail some links with references from the network viewpoint. Multistage interconnection networks have been an active research area
over the past forty years.  We refer the reader to \cite{Hw04}\cite{Lei92} for background on this field. 
Here, 
an assignment,
which is a mapping $S^n\rightarrow S$,
 is regarded as a set of edges in a bipartite graph between $S^n$ (input) and $S^n$ (output)
where an edge corresponds to the modification of the concerned component.
See Section \ref{sec:MIN} for details.
All the results of the paper
can be translated in this context. More precisely, making successive
modifications of all consecutive components of $X\in S^n$ 
is equivalent to
routing a butterfly network (i.e. a suitably ordered hypercube, sometimes called indirect binary cube) when $S=\{0,1\}$, or an $s$-ary butterfly network 
for an arbitrary finite set $S$ with $\left|S\right|=s$.  
Butterfly networks are a classical tool in network theory, and we mention that butterfly-type structures also appear naturally in recursive computation,
for example in the implementation of the well-known FFT algorithm \cite{CoTu65}, 
see \cite{Lei92}.

In the boolean case, the existence of an in situ program with
$2n-1$ assignments for a bijective mapping is equivalent to the well
known~\cite{Be64} rearrangeability of the Bene\v s network (\emph{i.e.}
of the concatenation of two reversed butterflies), that is: routing a Bene\v s network can
perform any permutation of the input vertices to the output vertices.
Such a rearrangeability result can be extended to an arbitrary finite set $S$ and an $s$-ary Bene\v s network by means of the rearrangeability properties of the Clos network \cite{Hw04}. 

%

The problem of routing a general arbitrary mapping
instead of a permutation, where several inputs may have the same output,
is equivalent, up to reversing the direction of the network,
to the rearrangeable multicast routing problem:
one input may have several outputs, each output must be reachable from the associated input, and the different trees joining inputs to their outputs must be edge-disjoint.
This general problem is a classical one in network theory, where sometimes \emph{rearrangeable} is called \emph{rearrangeable non-blocking}, and a huge number of routing methods have been developed for various networks, whose aims are to minimize the number of connections and to maximize the flexibility of the routings. 
As an instance of network derived from the butterfly network, an efficient construction consists in stacking butterfly networks \cite{Le90}.
Other examples and further references can be found for instance in \cite{Hw04}.

In this paper, we are interested in networks obtained by concatenation of butterfly networks (a construction sometimes called cascading). 
A rearrangeable multicast routing method for such (boolean) networks was proposed 
in \cite{Of65}, involving five copies of the butterfly network, with possible reversions.
It was noticed in \cite{Th78} that one can remove one of these copies preserving the same rearrangeability property, yielding four copies only.
In \cite{LiCh99}, a similar construction has been given, based on (boolean) baseline networks instead of butterfly networks (yielding an equivalent result since those two $log_2N$ networks are known to be topologically equivalent, see
\cite{BeFoJM88} for instance).

In Section \ref{sec:general}, we investigate this problem under the setting of in situ programs, and we provide similar results than those cited above, 
with slight variants and complementary results (arbitrary finite sets, inversion of bijections...).
Also, this provides a practical framework that unifies those network results from the literature. 
This connection is not always clear from the way those references were written, and we feel that this survey work is interesting on its own. 
And this framework will serve again for the next section.
This yields finally an in situ program of length $4n-3$ for a general mapping.
The common general idea of those constructions involving four butterfly copies is the following: first group the vectors having same image, using two copies that provide a (unicast) rearrangeable network; then use one copy to give all those vectors a common image; and lastly use one copy to bring those images to the final required output.
The efficiency of the two last steps relies on the capability of the butterfly network to map arbitrary inputs onto consecutive outputs in same order
 (a sorting property called \emph{infra-concentrator} property in \cite{Hw04}, or \emph{packing problem} property in \cite{Lei92}).
So, the limitation of this type of multicast routing strategy is that the groups formed at the middle stage have to be exactly in the same order than the final images, thus this middle stage is (almost) totally determined, yielding a poor flexibility. 

In Section \ref{sec:boolean},  we provide a new and more sophisticated construction, relying on the same framework. It involves  arithmetical properties because of which we assume that $|S|$ is a power of 2. We mention that those propeties, and so the whole construction, can be extended to arbitrary  $|S|$ as noticed in \cite{Ga1}.
It yields new results on butterfly routing properties refining its classical sorting property (Proposition \ref{prop:compose}), and a more flexible multicast routing method for general mappings, with the same number of stages (four copies of the butterfly network), i.e. same in situ program length (Theorem \ref{th:4n}).
The improvement is to allow  a huge number of possible orderings of the groups at the middle stage (see Remark \ref{flexible} for details).

%

\bigskip

Let us now give some details from the algorithmic viewpoint.
Building assignments whose number
is linear in $n$ to perform a mapping of $S^n$ to itself is satisfying in the
following sense.  If the input data is an arbitrary mapping
$E : S^n\rightarrow S^n$ with $|S|=s$, given as a table of $n\times s^n$ values, then the
output data is a linear number of mappings $S^n\rightarrow S$ whose total
size is a constant times the size of the input data.  This means
that the in situ program of $E$ has the same size as the definition of
$E$ by its components, up to a multiplicative constant.  This
complexity bound is essentially of theoretical interest, since in terms
of effective technological applications, it may be difficult to deal with
tables of $n\times s^n$ values for large $n$.
%
Alternatively, assignments can be defined with algebraic expressions,
for instance mappings $\{0,1\}^n\rightarrow \{0,1\}$ are exactly multivariate polynomials of degree at most $n$ on $n$ variables on the binary field $\{0,1\}$.
Hence, it is interesting to deal with an input data given by
algebraic expressions of restricted size,
like polynomials of bounded degree for instance, and compare the
complexity 
of the
assignments in the output data with the input one,
for instance using polynomial assignments of a related bounded degree.
This general question
(also related to the number of gates in a chip design)
can motivate further research 
(examples are given in \cite{BuGi08}, see also Open problems \ref{open:alg} at the end of the paper).

Here, in Section \ref{sec:linear}, we prove that, in the linear case,
i.e. if the input is given by polynomials with degree at most 1,
with respect to any suitable algebraic structure for $S$ (e.g. any field, or \ $\bZ/s\bZ$), then the
assignments are in number $2n-1$ and overall are also linear.
Hence, we still obtain a program whose size is
proportional to the restricted size of the input mapping.
We mention that this decomposition method takes $O(n^3)$ steps to build the program,
and that if the mapping is invertible, then we get naturally a program for the inverse.
This result generalizes to a large extent the result in \cite{BuMo04bis} obtained for linear mappings on the binary field.
In terms of multistage interconnection networks, a similar result is given in \cite{RaBo91}, also for the binary field only. Here, we get rearrangeable non-blocking multicast routing methods for the $s$-ary Ben\v es network as soon as the input/outputs are related through a linear mapping on any suitable more general algebraic structure.
Let us insist on the fact that this result is way more general than its restriction to the boolean field.
First, there is a generalization from the boolean field $\bZ/2\bZ$ to any field, such as $\bZ/p\bZ$ for $p$ prime.
Secondly, there is a generalization to general rings
such as $\bZ/n\bZ$ for any integer $n$ (which are not necessarily fields, i.e. elements are not necessarily invertible). 
Linear mappings on such general rings are fundamental and much used in mathematics and computer science (e.g. in cryptography).
Those two generalizations should be considered as non-trivial theoretical jumps.
Also,  from the algebraic viewpoint,  this provides a new result on matrix decompositions.
\bigskip

Finally, let us mention that some of the original motivation for this research was in terms of technological applications. A permanent challenge in computer science 
consists in increasing the
performances of computations and the speed of processors.
A computer decomposes a
computation in elementary operations on elementary objects.  For
instance, a $64$ bits processor can only perform operations on $64$
bits, and any transformation of a data structure
must be decomposed in successive
operations on $64$ bits.
Then, as shown in the above example
on the exchange of the contents of two registers $x_1$ and $x_2$, 
the usual solution to ensure the completeness of the computation 
is to make copies from the initial data. 
But this solution can
generate some memory errors when the structures are too large, or at
least decrease the performances of the computations.  
Indeed, such
operations involving several registers in a micro-processor, through
either a compiler or an electronic circuit, will have either to make copies of
some registers in the cache memory or in RAM, with a loss of speed,
or to duplicate signals in the chip design itself, with an extra
power consumption.
%
On the contrary, the theoretical solution provided by in situ computation
would possibly avoid the technological problems alluded to, and hence increase the performance.

\section{In situ programs}
\label{sec:insitu}


For the ease of the exposition, we fix for the whole paper a finite set $S$ of cardinality $s=\gl S\gl$, a strictly positive integer $n$ and a mapping
$E:S^n\rightarrow S^n$. 
This paper strongly relies on the following definition.
    

\begin{definition} 
\label{def:in situ program}
{\rm
An {\it in situ program $\Pi$} of a mapping $E:S^n\rightarrow S^n$ is a finite sequence 
$$(\f_1,i_1),\ (\f_2,i_2),\ ...,\ (\f_m,i_m)$$
of {\it assignments} such that:

\noindent - for $k=1,2,...,m$,
we have $\f_k:S^n\rightarrow S$ and $i_k\in\{1,...,n\}$;

\noindent - every transformation $X=(x_1,..., x_n)\mapsto E(X)$ is computed
through the sequence
$$X=X_0, X_1,\ldots, X_{m-1}, X_m=E(X)$$
where, for $k=1,2,...,m$,
the vector $X_k$ has the same components as $X_{k-1}$ except component $x_{i_k}$ which is equal to $\f_k(X_{k-1})$.
%

In other words, $\f_k$ modifies only 
the $i_k$-th component~of~the current vector, that is: 
every assignment $(\f_k,i_k)$ of an in situ program performs the elementary operation $$x_{i_k}:=\f_k(x_1,...,x_n).$$
The {\it length} of $\Pi$ is the number $m$. The {\it signature} of $\Pi$ is the sequence $$i_1,i_2,...,i_m.$$ 
}
\end{definition}

All in situ programs considered throughout this paper operate on
consecutive components, traversing the list of all indices, possibly several
times in forward or backward order. Thus program signatures will all be
of type: $$1, 2, ..., n-1, n, n-1,...,2,1,2,...n-1,n,...$$
For ease of
exposition, 
the mappings $S^n\rightarrow S$ in the corresponding sequence of assignments
will be simply distinguished by different letters, e.g. $f_i$ denotes the
mapping affecting the variable $x_i$ on the first traversal, $g_i$ the
one affecting $x_i$ on the second traversal, and so on,
providing 
a sequence of assignment mappings
denoted 

\centerline{$f_1, f_2, ...,f_{n-1}, f_n, g_{n-1},..., g_2,g_1,...$}
\noindent where each index gives the index of the component modified by the mapping.
%
%
%
%
%
%
%
%
For instance, a
program $f_1,f_2,g_1$ on $S^2$ represents the sequence of operations:
{$x_1:=f_1(x_1,x_2);$ $x_2:=f_2(x_1,x_2);$ $x_1:=g_1(x_1,x_2).$}
\bigskip


As an example, it is easy to see that a mapping consisting in a cyclic permutation of $k$ variables in a group $S$ can be computed in $k+1$ steps using the $k$ variables only.
This is an extension of the case of the exchange of two variables.
Precisely $(x_1,\dots,x_k)\mapsto (x_2,\dots, x_k,x_1)$ is computed by the in situ program:
 $x_1:=x_1+x_2+\dots+x_k$;  $x_k:=x_1-x_2-\dots-x_k$; $\ldots$;  $x_2:=x_1-x_2-\dots-x_k$.
 This length turns out to be a minimal bound for this type of mapping, as shown in the next proposition, which we state as an example of an in situ computation property  whose proof is not so trivial.
Let us mention that this proposition has been suggested by \cite{De11}, and that \cite{Ga1} provides a similar but slightly more general result authorizing overwriting of variables.

\begin{proposition}
\label{permut}
If $S$ is a finite set, and the mapping $E : S^n\rightarrow S^n$ consists in a permutation of the $n$ variables, then an in situ program for $E$ has a length greater than $n - f +c$, where $c$ is the number of cycles of $E$ non-reduced to one element, and 
$f$ is the number of invariant elements of $E$.
\end{proposition}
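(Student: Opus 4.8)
The plan is to establish the sharp lower bound: any in situ program for such an $E$ uses at least $n-f+c$ assignments, which is best possible, since running the cyclic-permutation program of the preceding example on each nontrivial cycle attains it (I therefore read the stated bound as this optimum). Write $\sigma$ for the underlying permutation of indices, so that $E(X)_i=x_{\sigma(i)}$; the $f$ invariant elements are the fixed points of $\sigma$, and the $c$ nontrivial cycles are the orbits of size $\ge 2$, which together cover the $n-f$ non-fixed indices.

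First I would record a structural fact used throughout: \emph{every assignment of the program is a bijection of $S^n$}. Writing $F_k$ for the map $X_0\mapsto X_k$ computed by the first $k$ assignments, one has $\mathrm{Im}(F_k)=g_k(\mathrm{Im}(F_{k-1}))$ where $g_k$ is the $k$-th assignment, so $|\mathrm{Im}(F_k)|$ is non-increasing; since $F_0=\mathrm{id}$ and $F_m=E$ are bijections, all these cardinalities equal $s^n$, forcing every $F_k$, hence every $g_k$, to be a bijection. Concretely, an assignment $x_i:=\f(x_1,\dots,x_n)$ is a bijection of $S^n$ exactly when, with the other coordinates frozen, $x_i\mapsto\f$ is a bijection of $S$. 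The key consequence is \emph{reversibility}: an assignment to $x_i$ can never merely copy another coordinate into $x_i$ (that would not depend bijectively on the old $x_i$), so original values may be recombined but never duplicated away for free.

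The baseline count is then easy. If $\sigma(i)\ne i$ and $x_i$ is never assigned, its final content is the projection $\pi_i$ whereas $E$ demands $\pi_{\sigma(i)}$, and these differ as functions (compare inputs differing only at positions $i$ and $\sigma(i)$, using $s\ge 2$); hence each of the $n-f$ non-fixed coordinates is assigned at least once, giving length $\ge n-f$, with equality only if each non-fixed coordinate is assigned exactly once and the fixed ones never. It remains to produce, for each of the $c$ nontrivial cycles, one \emph{additional} assignment. For a single nontrivial cycle $C$, suppose every coordinate of $C$ is assigned exactly once, and consider the coordinate $p\in C$ whose unique assignment comes last among those of $C$. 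Just before it, every other coordinate of $C$ already holds its final value $\pi_{\sigma(\cdot)}$ while $x_p$ still holds $\pi_p$, and the assignment must turn $x_p$ into $\pi_{\sigma(p)}$. But no slot of $C$ currently holds the pristine value $x^0_{\sigma(p)}$ — the only index $\sigma$ maps to $\sigma(p)$ is $p$, whose slot still holds $x^0_p$ — so within $C$ exactly the value $x^0_{\sigma(p)}$ is missing while $x^0_p$ still has to be accounted for; by reversibility, producing $x^0_{\sigma(p)}$ in slot $p$ is impossible from the information present in $C$ alone, and forces information about a $C$-coordinate to be parked in some slot \emph{outside} $C$. That foreign slot is destined for a value not involving this coordinate, so it must be assigned again to be cleaned up: this is the extra assignment charged to $C$. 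Summing over the $c$ cycles gives length $\ge (n-f)+c$.

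I expect the main obstacle to be making this last step fully rigorous, namely proving that the $c$ extra assignments cannot be \emph{amortized} — that a single corrective assignment outside the cycles cannot discharge the debt of two different cycles at once. The honest way to settle this is to replace the cycle-local reasoning by a global conserved quantity: track, along the computation, for each original coordinate $x^0_j$ the set of slots whose current content genuinely depends on $x^0_j$, and show, using that every step is a reversible single-coordinate update, that resolving a nontrivial cycle necessarily creates and later destroys such a foreign dependency, each event consuming its own assignment that no other cycle can reuse. Establishing this non-interference invariant — cleanest in the field case via a rank argument on the intermediate bijections and then transferred to general $S$ — is the technical heart; the remaining bookkeeping is routine.
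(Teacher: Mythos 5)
Your baseline count is sound: the observation that every assignment in a program computing a bijection must itself be a bijection of $S^n$, and the resulting bound of $n-f$ assignments, both match what the paper uses. The genuine gap is exactly where you flag it: the ``$+c$'' half. Your per-cycle charge does not work as stated, because the ``foreign slot'' that must be reassigned after the last assignment of a tight cycle $C$ need not absorb an \emph{extra} assignment at all --- it can be discharged by the first (and only) assignment of a variable of another cycle $C'$ whose assignments are simply scheduled after those of $C$. So a baseline assignment of $C'$ can pay the debt of $C$, and no cycle-local bookkeeping repairs this; the ``non-interference invariant'' you defer to (a rank argument over a field, ``transferred'' to general $S$) is not established, and it is unclear it even makes sense when $S$ is a bare finite set with no algebraic structure. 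What your argument does yield, if one takes the cycle containing the very last assignment of the whole program (for that cycle there is no later assignment to clean up the foreign slot), is a contradiction giving length at least $n-f+1$ --- but not $n-f+c$.

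The paper avoids this entirely by two moves you did not make. First, a pigeonhole: if the length were smaller than $n-f+c$, then since each non-fixed variable is assigned at least once, \emph{some} nontrivial cycle is tight (each of its variables assigned exactly once); nothing needs to be proved about the other cycles. Second --- and this is the key --- it examines the \emph{first} variable of that tight cycle to be assigned, not the last. Say this is $x_1$, with required final value $x^0_2$ where $x_2$ lies in the same cycle. At that moment $x_2$ is still untouched, so the assignment must output the frozen value $x^0_2$; equivalently, the prefix program ending at this assignment, with all its unmodified variables held as constants, maps $S^{i+1}$ (where $i$ is the number of other variables it modifies) into a set of size at most $|S|^i$, contradicting the fact that every prefix of a program computing a bijection is injective. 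Working at the first assignment makes the successor's pristine value a \emph{constant} of the prefix, so no information ``parked outside'' can help, and all cross-cycle amortization issues evaporate: the contradiction is local to the single tight cycle the pigeonhole provides. Your reversibility lemma is precisely the right tool --- a forced copy of an untouched variable can never be a bijective step --- but it has to be applied at the other end of the cycle.
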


\begin{proof}
%
%
In what follows, we assume that $f=0$, then the proof can be extended directly to the case where $f>0$ by applying it to the restriction of $E$ to $S^{n-f}$.
Assume there exists an in situ program $\Pi$ for $E$ of length strictly smaller than $n+c$.

Then there exists a cycle (non-reduced to a single variable) whose variables are modified once and only once each in the program (since each non-invariant variable is modified at least once).
Assume the sequence of assignments modifying the variables of that cycle transforms the variables $x_1,\dots, x_k$ into $x_2, \dots, x_k, x_1$ respectively.
Let us consider the first variable modified by the program amongst those variables, say it is $x_1$. Then it is necessarily modified by the assignment $x_1:=x_2$.

Let us consider the program $\Pi'$ formed by all assignments of the program $\Pi$ from the first one to the assignment $x_1:=x_2$, included.
The variables  which are modified by those assignments are $x_1$ and some variables $y_1,\dots y_i$.
Let us consider every other variable from $\Pi$ as a constant for $\Pi'$.

The vector $(y_1,...,y_i,x_1)$ can have at the beginning every possible value in $S^{i+1}$.
Since the permutation of variables is a bijection, this program $\Pi'$ has to compute a bijection from 
$S^{i+1}$ into $S^{i+1}$.
But this is impossible since the image of the mapping computed by $\Pi'$ has a size bounded by $|S| ^i $, because the assignment $x_1:=x_2$  ends the program $\Pi'$ where $x_2$ is a constant.
Note that  we use that $S$ is finite.
\end{proof}

\section{Multistage interconnection networks}
\label{sec:MIN}

Among formalism and terminology variants in the network theory field,
we will remain close to that of \cite{Hw04} and \cite{Lei92}. 
Also, we prefer to define a network as a directed graph, whose routing consists in choosing edges to define directed paths, rather than considering vertices as switches with several routing positions to choose. Those two formal options are obviously equivalent.
\medskip

A {\it multistage interconnection network}, or {\it MIN} for short, is a directed graph
whose set of vertices is a finite number of copies $S_1^n, S_2^n, \ldots, S_k^n$ of $S^n$, called {\it stages},
 and whose edges join elements of $S_i^n$ towards some elements of $S_{i+1}^n$ for $1\leq i<k$.
Then {\it routing} a MIN is specifying one outgoing edge from each vertex of $S_{i}^n$ for $1\leq i<k$.
A mapping $E$ of $S^n$ is {\it performed} by a routing of a MIN if
for each element $X\in S_1^n$ there is a directed path using specified edges
from $X$ to $E(X)\in S_k^n$.
The fact that a MIN performs a mapping $E$ can be seen as the reverse of a multicast communication pattern where one input may lead to several outputs. So, a MIN is called \emph{rearrangeable non-blocking multicast}
if every mapping of $S^n$ can be performed by this MIN.
The {\it concatenation} of two MINs $M,M'$ is the MIN $M\gl  M'$ obtained by identifying the last stage of $M$ and the first stage of $M'$.
\medskip

The {\it assignment network} $A_i$ is the MIN with two stages
whose edges join $(x_1,\ldots,x_n)$ to $(x_1,\ldots,x_{i-1},$ $ e, x_{i+1},\ldots, x_n)$ for an arbitrary $e\in S$. Hence each vertex has degree $s=\gl S\gl$.
With notations of Definition \ref{def:in situ program}, given an assignment $(\f_k,i_k)$ in an in situ program,
we naturally define a routing of $A_{i_k}$
by specifying the edge
between $X=(x_1,\ldots, x_n)$ and $(x_1,...,x_{i_k-1},\f_k(X),x_{i_k+1},$ $...,x_n)$.
%
\medskip

\begin{figure}[htbh] 
\centerline{\includegraphics[scale=1.6]{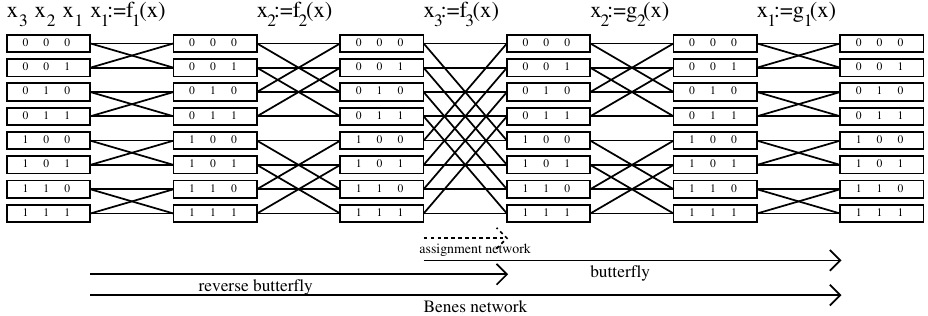}}
\centerline{ \small{ \bf Figure 1.}}
\end{figure}

The $s$-ary \emph{butterfly network}, or simply \emph{butterfly}, denoted $\Bsn$, or $\B$ for short, 
is the MIN $A_n\gl\dots $ $\gl A_2\gl A_1$.
Then $\R$ is the MIN $A_1\gl A_2\gl\dots \gl A_n$.
The usual $2$-ary {\it butterfly}, also called indirect binary cube, stands here as $B_{2,n}$.
The {\it Bene\v s network} is the network obtained from 
$\R\gl \B$ 
by replacing the two consecutive assignment networks $A_n$ by a single one. 
Note that this last
reduction is not part of the usual definition, however it is more
convenient here since two successive assignments on a same component can always
be replaced with a single one.
Note also that the historical definition
of a Bene\v s network \cite{Be64} is not in terms of butterflies, but
that ours is topologically equivalent thanks to classical results (see
\cite{Ag83} and \cite{BeFoJM88} for instance) implying that they are
equivalent in terms of mappings performed.
\medskip

The point is that the signature of an in situ program defines a MIN, and the set of assignments that realize a given mapping define the routing of the MIN by specifying some routing edges between stages.
From the above definitions, an in situ program of signature 
$n,\dots, 1$, or $1,\dots,n$, or $1,\dots, n\dots, 1$
corresponds to a routing in $\B$, or $\R$, or the Bene\v s network, respectively.
Figure 1 
gives an example for the Bene\v s network,
with corresponding in situ program $f_1,f_2,f_3,g_2,g_1$ (where indices show the modified components).
As explained above, routing this network is exactly specifying these mappings. 

\section{A formalization and survey of results for both viewpoints}
\label{sec:general}


In this section, we provide reformulations, variations, complements, or extensions for known network theory results, in terms of in situ programs. We point out that deriving these constructions from existing literature is not straightforward and is interesting on its own, notably because of various approaches and formalisms used. 
Useful references are recalled. 
The formalism and preliminary constructions introduced here will also serve as a base for the next section.

The classical property of the Bene\v s network is that it is {\it rearrangeable} (see \cite{Be64}\cite{Hw04}),
that is: for any permutation of $S^n$, there exists a routing performing
the permutation
(note that a routing performs a permutation when it defines edge-disjoint directed paths).
Theorem \ref{th:bijective} below reformulates this result.
A short proof in terms of in situ programs and using graph colouration is given in \cite{BuGiTh09}, yielding a construction of the in situ program 
in $DTIME(t.log(t))$,  where $t=n.2^n$ is the size of the table defining the mapping.
%

\begin{theorem} \label{th:bijective} 
Let $E$ be a bijective mapping on $S^n$. There exists an in situ program for $E$ of length $2n-1$ and signature
$1\dots n\dots 1$. Equivalently, 
$\R\gl  \B$ has a routing performing $E$.


\end{theorem}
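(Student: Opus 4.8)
The plan is to prove the statement by induction on $n$, reducing the routing of a bijection on $S^n$ in the network $\R\gl\B$ to the routing of two bijections on $S^{n-1}$ in a copy of $\R\gl\B$ of one lower dimension. The key structural observation is that the signature $1,2,\dots,n,\dots,2,1$ of length $2n-1$ consists of an outer pair of assignments on component $x_1$ (the first assignment $f_1$ and the last assignment $g_1$) sandwiching a program of signature $2,\dots,n,\dots,2$, which is exactly the signature of a bijection program on the last $n-1$ components $(x_2,\dots,x_n)$. So first I would set up this recursive decomposition: the first and last assignments only touch $x_1$, and everything in between is an in situ program on $S^{n-1}$ which, by the induction hypothesis, can realize any bijection of $S^{n-1}$ once we have fixed the value of $x_1$ throughout the middle phase.

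The heart of the argument is the combinatorial step that assigns, to each input vector $X\in S^n$, an intermediate value $a(X)\in S$ to be placed in the first component by the assignment $f_1$, in such a way that (i) the middle program is a well-defined bijection on the remaining coordinates for each fixed value of the first component, and (ii) the final assignment $g_1$ can restore the correct first coordinate of $E(X)$. The standard way to organize this is via an auxiliary bipartite graph whose two vertex classes are indexed by the $s^{n-1}$ possible values of $(x_2,\dots,x_n)$ on the input side and the $s^{n-1}$ possible values of the last $n-1$ coordinates of the output on the output side; each of the $s^n$ input vectors contributes an edge joining its input residue class to its output residue class. This graph is $s$-regular on both sides (because $E$ is a bijection, every residue class of inputs and every residue class of outputs contains exactly $s$ vectors), so I would invoke a proper edge-colouring with $s$ colours, which exists by König's edge-colouring theorem for bipartite graphs (equivalently, by repeated extraction of perfect matchings via Hall's theorem). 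The $s$ colours are identified with the $s$ possible intermediate values of $x_1$: colour $c$ tells us to send, via $f_1$, this vector into the ``slice'' where $x_1=c$.

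The colouring guarantees exactly what is needed. On the input side, each residue class (fixed $x_2,\dots,x_n$, varying $x_1$) receives all $s$ colours once each, so the map $x_1\mapsto a(X)$ is a bijection on the first coordinate for each fixed tail — this makes $f_1$ an admissible assignment (it is invertible in $x_1$, as required for the whole program to compute $E$). On the output side, each residue class likewise meets each colour once, so within each fixed value of $x_1=c$ the middle program must realize a \emph{bijection} from the $s^{n-1}$ vectors coloured $c$ onto the $s^{n-1}$ output tails, and this is precisely the bijection on $S^{n-1}$ that the induction hypothesis lets us route in the middle block of signature $2,\dots,n,\dots,2$. Finally $g_1$ restores the correct first output coordinate, which is well-defined because within each $x_1=c$ slice the pairing (middle-image tail)$\mapsto$(first output coordinate) is a function of the current state.

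The main obstacle is verifying that the edge-colouring genuinely produces \emph{consistent and simultaneously realizable} data for all three phases at once: one must check that the single colour assignment $a(X)$ can be read as a legitimate assignment map $f_1\colon S^n\to S$ (depending on all coordinates but invertible in $x_1$ for fixed tail), that the induced middle bijections for the distinct values of $x_1$ do not interfere — they act on disjoint slices of $S^n$ indexed by $x_1$, so they can be routed independently and then amalgamated into one in situ program on components $2,\dots,n$ — and that $g_1$ is well-defined as a function of the post-middle state. Once the regularity of the bipartite graph and König's theorem are in hand, these verifications are routine bookkeeping; the base case $n=1$ (signature of length $1$, a single assignment realizing any bijection of $S$) is immediate. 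I would present König's theorem as the one external ingredient and devote the bulk of the write-up to the translation between the colouring and the three-phase program.
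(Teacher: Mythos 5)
Your proof is correct, and it is essentially the same argument the paper relies on: the paper does not reprove Theorem~\ref{th:bijective} inline but cites \cite{BuGiTh09} for ``a short proof in terms of in situ programs and using graph colouration,'' which is exactly your construction --- the $s$-regular bipartite multigraph on input/output tails, a proper $s$-edge-colouring via K\"onig's theorem to choose the intermediate value of $x_1$, and induction on $n$ for the middle block of signature $2\dots n\dots 2$. This is also the classical proof of Bene\v s rearrangeability, which the paper identifies as the equivalent network-theoretic statement.
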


From a routing for a bijection $E$, one immediately gets a routing for $E^{-1}$ by reversing the network. Hence, the mappings corresponding to the reversed arcs in the reserved network define an in situ program of $E^{-1}$. In the boolean case, we even obtain more: one just has to use exactly the same assignments but in the reserved way, as stated in Corollary \ref{cor:reverse}. 

\begin{corollary} \label{cor:reverse}
If $\Pi$ is an in situ program of a bijection $E$ on $\{0,1\}^n$, then
the reversed sequence of assignments is an in situ program of the inverse bijection $E^{-1}$.
\end{corollary}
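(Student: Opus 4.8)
The plan is to read the program as a composition of elementary maps and to exploit the fact that, over $\{0,1\}$, each such map is its own inverse. Write the assignment $(\f_k,i_k)$ as the map $\tau_k:\{0,1\}^n\rightarrow\{0,1\}^n$ that replaces the $i_k$-th coordinate of a vector $V$ by $\f_k(V)$ and leaves the other coordinates unchanged, so that the forward program computes $E=\tau_m\circ\cdots\circ\tau_1$ and the reversed program computes $\tau_1\circ\cdots\circ\tau_m$. It therefore suffices to prove that each $\tau_k$ is an involution: then, running the reversed program on $Y=E(X)$ retraces the intermediate vectors $X_m,X_{m-1},\dots,X_0$ produced by the forward run on $X$, since $X_k=\tau_k(X_{k-1})$ gives $\tau_k(X_k)=X_{k-1}$, and the final vector reached is $X_0=X=E^{-1}(Y)$.

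First I would show that every $\tau_k$ is a bijection of $\{0,1\}^n$. This uses finiteness, exactly as in Proposition \ref{permut}: for maps of a finite set one has $|\mathrm{Im}(g\circ f)|\le|\mathrm{Im}(f)|$, so grouping $E=(\tau_m\circ\cdots\circ\tau_{k+1})\circ(\tau_k\circ\cdots\circ\tau_1)$ yields $|\mathrm{Im}(E)|\le|\mathrm{Im}(\tau_k\circ\cdots\circ\tau_1)|\le|\mathrm{Im}(\tau_k)|\le 2^n$; as $E$ is bijective the image of $\tau_k$ has full size $2^n$, hence $\tau_k$ is onto and thus bijective.

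The crux, and the only place where the boolean hypothesis is essential, is the next step: a bijective single-coordinate assignment on $\{0,1\}^n$ is automatically an involution. Fix a vector $V$, let $Y$ denote its coordinates other than $i_k$, and consider the slice map $\sigma:x\mapsto \f_k(Y,x)$ on $\{0,1\}$. Two vectors that differ only in coordinate $i_k$ are sent by $\tau_k$ to vectors agreeing outside $i_k$ and carrying $\sigma$ of the respective $i_k$-values, so injectivity of $\tau_k$ forces $\sigma$ to be injective, hence a bijection of the two-element set $\{0,1\}$. But the only bijections of $\{0,1\}$ are the identity and the transposition, both involutions; therefore $\sigma(\sigma(x))=x$, and applying $\tau_k$ twice restores coordinate $i_k$ while leaving the rest fixed, i.e.\ $\tau_k\circ\tau_k=\mathrm{id}$. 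With every $\tau_k$ an involution, the retracing described above recovers $X_0=E^{-1}(Y)$, so the reversed sequence of assignments computes $E^{-1}$.

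I expect the whole difficulty to be concentrated in isolating why the argument is special to $\{0,1\}$: over a larger set $S$ a bijective assignment still induces a bijection of $S$ on each slice, but such a permutation need not be an involution, so replaying the same assignment backwards does not undo it. In that generality one must instead invert each slice permutation, equivalently use the mappings carried by the reversed arcs as in the remark preceding the statement, which is precisely why only the boolean case yields the stronger conclusion that the very same assignments, merely reordered, compute the inverse.
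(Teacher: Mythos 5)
Your proof is correct and takes essentially the same route as the paper: the paper shows every assignment of a bijective boolean program must have the form $x_i:=x_i+h(x_1,\ldots,x_{i-1},x_{i+1},\ldots,x_n)$, which is exactly your observation that each slice permutation of $\{0,1\}$ is forced to be a bijection of a two-element set and hence an involution, and both arguments then conclude that replaying the (involutive) assignments in reverse order retraces the computation and yields $E^{-1}$. Your write-up merely makes explicit two points the paper compresses into ``otherwise two different vectors would map to the same image'' --- namely that bijectivity of $E$ on the finite set $\{0,1\}^n$ forces each assignment map $\tau_k$ to be bijective, and the step-by-step retracing $X_m\mapsto X_{m-1}\mapsto\cdots\mapsto X_0$.
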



\begin{proof} 
First, we show that operations in the program $\Pi$ are necessarily
of the form $$x_i:=x_i+h(x_1,..,x_{i-1},x_{i+1},...,x_n).$$
One can assume without loss of
generality that $i=1$. Let $x_1:=f(x_1,...,x_n)$ be an operation of $\Pi$.
Denote $$h(x_2,\dots,x_n)=f(0,x_2,\dots,x_n).$$
We necessarily have
$f(1,x_2,\dots,x_n)=1+h(x_2,\dots,x_n)$. Otherwise two different vectors
would map to the same image. This yields
$f(x_1,\dots,x_n)=x_1+h(x_2,\dots,x_n)$.
As a straightforward consequence, performing the operations in reverse order will
compute the inverse bijection $E^{-1}$.
\end{proof}
\bigskip


Now, in order to build a program for a general mapping $E$ on $S^n$, for which different
vectors may have same images, we will use a special kind
of mappings on $S^n$,
that can be computed with $n$ assignments.


%

\begin{definition}
\label{def:indexed_mapping}
{\rm
It is assumed that $S = \{0,1,\ldots,s-1\}$.
Denote $[s^n]$ the interval of integers $[0,\ldots,s^n-1]$.
The {\it index} of a vector $(x_1,x_2,\dots,x_n)$ is the integer $x_1+s.x_2+\dots+s^{n-1}.x_n$ of $[s^n]$.
For every $i\in[s^n]$, denote by $X_i$ the vector of index $i$.
The {\it distance} of two vectors $X_a,X_b$ is the integer $\Delta(X_a,X_b)=|b-a|$.

A mapping $I$ on $S^n$ is {\it distance-compatible} if for every $x,y\in S^n$, we have $\Delta(I(x),I(y))\le\Delta(x,y)$,
which is equivalent to $\Delta(I(X_a),I(X_{a+1}))\le 1$ for every $a$
 with $0\le a< s^n-1$.
}
\end{definition}

%
%
%
%

Proposition \ref{prop:direct} and Corollary \ref{cor:direct}
below provide an extension of a well-known property of the butterfly network in terms of in situ programs:
it can be used to map the first $k$ consecutive inputs onto any set of $k$ outputs in the same order.
In \cite{Of65}, this property is used in a similar way than ours,
as recalled in \cite{Hw04} (see notably Theorem 4.3.7, where this network is shown to be a \emph{multicast infra-concentrator},
and see also  \cite{Lei92}, Section 3.4.3., where this property is used to solve the \emph{packing routing problem}, with a proof similar to ours).

\begin{proposition}\label{prop:direct}
Every distance-compatible mapping $I$ on $S^n$ is computed by an in situ program with signature $1,\dots,n.$
This program $p_1, p_2,\ldots, p_n$ satisfies,
for $I(x_1,\dots,x_n)=(y_1,\ldots,y_n)$ and for each $i=1,2,\dots,n$: $$p_i(y_1,\ldots,y_{i-1},x_i,\ldots,x_n)=y_i.$$
\end{proposition}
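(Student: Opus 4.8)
The plan is to define each assignment $p_i$ directly by the prescribed identity, then to check that this is a legitimate (single-valued, total) map $S^n\rightarrow S$ and that the resulting program of signature $1,\dots,n$ indeed computes $I$.

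First I would record the computation explicitly. Writing $I(x_1,\dots,x_n)=(y_1,\dots,y_n)$ and letting $X_{i-1}$ denote the current vector just before the $i$-th assignment, a one-line induction shows that, provided each $p_j$ obeys the stated identity, one has $X_{i-1}=(y_1,\dots,y_{i-1},x_i,\dots,x_n)$; hence $p_i$ reads precisely this vector and is forced to return $y_i$, and after the last step $X_n=(y_1,\dots,y_n)=I(X)$. So the formula is the only admissible definition, and the entire difficulty is to verify that it is consistent, i.e.\ that two inputs $X$ and $X'$ reaching the same intermediate vector must yield the same $y_i$.

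I would therefore fix a target argument $V=(y_1,\dots,y_{i-1},x_i,\dots,x_n)$ and examine all inputs of $I$ that reach $V$ after $i-1$ steps; these are exactly the vectors whose last components equal $x_i,\dots,x_n$ and whose image has prefix $y_1,\dots,y_{i-1}$. Using the index encoding of Definition~\ref{def:indexed_mapping}, fixing the high digits $x_i,\dots,x_n$ confines the index of the input to a block of $s^{i-1}$ consecutive integers. This is where distance-compatibility enters: since consecutive indices have images at distance at most $1$, the images of these $s^{i-1}$ consecutive vectors all lie within a window of at most $s^{i-1}$ consecutive integers.

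The crux of the proof, and the step I expect to be the main obstacle, is then a short counting remark modulo $s^{i-1}$. Prescribing the output prefix $y_1,\dots,y_{i-1}$ is exactly prescribing the residue of the output index modulo $s^{i-1}$, and a window of at most $s^{i-1}$ consecutive integers meets each residue class modulo $s^{i-1}$ at most once. Consequently, among all inputs reaching $V$ there is a single possible image index, hence a single possible value $y_i$, so $p_i$ is well defined (on the remaining, non-reachable arguments it may be set arbitrarily). Combined with the induction above, this shows the program computes $I$; the rest is bookkeeping. I would also note that finiteness of $S$ is used here only through the consecutiveness of indices.
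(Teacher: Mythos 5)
Your proof is correct and is essentially the paper's own argument: the paper also notes the formula is forced, then proves well-definedness inductively by observing that inputs reaching the same intermediate vector share a suffix (so their indices differ by less than $s^i$), while distinct images sharing a prefix would have indices differing by at least $s^i$, contradicting distance-compatibility. Your counting remark modulo $s^{i-1}$ (a window of $s^{i-1}$ consecutive integers meets each residue class at most once) is just a rephrasing of that same distance contradiction, so the two proofs coincide in substance.
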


\begin{proof}
Since each component is modified exactly one time in a program with signature $1,\dots, n$, necessarily each function $p_i$ must give its correct final value to each component $x_i$.
It remains to prove that this unique possible method is correct, 
that is the mappings $p_i$ are well defined by the above formula,
that is, for each $p_i$, a same vector cannot have two different images 
according to the definition.
Note that the given definition is partial, but sufficient for computing the image of any $x$.

Assume that $p_1,..., p_i$ are well defined.
Assume that, after step $i$,
two different vectors $x,x'$ are given the same image by the process whereas their final expected images $I(x)$ and $I(x')$ were different.
The components $x_j$, $j>i$, of $x$ and $x'$ have not been modified yet. Hence, they are equal and we deduce $\Delta(x,x')<s^i$.
On the other hand, the components $y_j$, $j\leq i$,  of $I(x)$ and $I(x')$ are equal but $I(x)\not=I(x')$.
Hence $\Delta(I(x),I(x'))\ge s^i$: a contradiction.
So $p_{i+1}$ is also well defined by the given formula.
\end{proof}

\begin{corollary}
\label{cor:direct}
Let $I$ be a mapping on $S^n$ preserving the strict ordering of a set of  consecutive vectors $X_i,\dots,X_j$, for $0\leq i<j<s^n-1$.
Then the restriction of $I$ to the set of vectors $\{X_i,\dots,X_j\}$ can be computed by an in situ program with signature $n,\dots,1.$
\end{corollary}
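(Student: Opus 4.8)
The plan is to run, \emph{mutatis mutandis}, the argument of Proposition~\ref{prop:direct}, exploiting the fact that the signature $n,\dots,1$ is the exact mirror of the signature $1,\dots,n$. In a program of signature $n,\dots,1$ each component is again modified exactly once, component $x_i$ being updated at step $n-i+1$; at that moment the high components $x_n,\dots,x_{i+1}$ already carry their final values $y_n,\dots,y_{i+1}$ while the low components $x_1,\dots,x_i$ still carry their original values. Hence, writing $I(x_1,\dots,x_n)=(y_1,\dots,y_n)$, the only possible choice for the mapping $q_i$ updating $x_i$ is forced by
$$q_i(x_1,\dots,x_i,y_{i+1},\dots,y_n)=y_i,$$
and it suffices to show that, for inputs ranging over the block $\{X_i,\dots,X_j\}$, each $q_i$ is well defined (off the block the $q_i$ may be completed arbitrarily).

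The one new ingredient, replacing distance-compatibility, is the following reverse distance bound: if $I$ strictly preserves the ordering of the consecutive vectors $X_i,\dots,X_j$, then for all $X_a,X_b$ in this block one has $\Delta(I(X_a),I(X_b))\ge\Delta(X_a,X_b)$. Indeed, for $a<b$ the strict monotonicity of the indices of $I(X_a),\dots,I(X_b)$, all integers, forces the index of $I(X_b)$ to exceed that of $I(X_a)$ by at least $b-a$, which is exactly $\Delta(X_a,X_b)$. With this in hand I would argue by descending induction on $i$, exactly mirroring Proposition~\ref{prop:direct}: assuming $q_n,\dots,q_{i+1}$ well defined, suppose two block vectors $X\ne X'$ feed the same argument to $q_i$. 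Then they share the low components $x_1,\dots,x_i$ and the already written high values $y_{i+1},\dots,y_n$; the latter means $I(X)$ and $I(X')$ agree on components $i+1,\dots,n$, so $\Delta(I(X),I(X'))<s^i$, while the former means $X$ and $X'$ can differ only on components $i+1,\dots,n$, so $\Delta(X,X')$ is a nonzero multiple of $s^i$ and thus $\Delta(X,X')\ge s^i$. The reverse distance bound now gives $\Delta(I(X),I(X'))\ge\Delta(X,X')\ge s^i$, a contradiction unless $I(X)=I(X')$; hence $q_i$ assigns a single value $y_i$ to that argument and is well defined.

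The main thing to get right---and the only real subtlety---is the direction of every inequality: passing from signature $1,\dots,n$ to $n,\dots,1$ swaps the roles of the high and low components, and simultaneously the hypothesis switches from the upper bound $\Delta(I(\cdot),I(\cdot))\le\Delta(\cdot,\cdot)$ of Proposition~\ref{prop:direct} to the lower bound above. Once these two reversals are seen to match, the bookkeeping is identical to that of the proposition and no genuinely new estimate is required. I would also remark explicitly that, since we only claim to compute the \emph{restriction} of $I$ to $\{X_i,\dots,X_j\}$, well-definedness of the $q_i$ is needed only for block inputs, so no global hypothesis on $I$ outside the block is required.
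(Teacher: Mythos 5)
Your proof is correct, but it takes a genuinely different route from the paper's. The paper proves the corollary by reduction to Proposition~\ref{prop:direct}: since $I$ is injective on $\{X_i,\dots,X_j\}$, it takes a mapping $I^{-1}$ extending the inverse of this restriction, completes it so as to be distance-compatible, applies Proposition~\ref{prop:direct} to get a program of signature $1,\dots,n$ for $I^{-1}$, and then reverses those assignments --- a step that is most transparent in network terms, where reversing the routing of $\R$ directly yields a routing of $\B$ performing the restriction of $I$. You instead rerun the induction of Proposition~\ref{prop:direct} for the mirrored signature $n,\dots,1$: the forced assignments are $q_i(x_1,\dots,x_i,y_{i+1},\dots,y_n)=y_i$, and well-definedness on block inputs follows from your reverse distance bound $\Delta(I(X_a),I(X_b))\ge\Delta(X_a,X_b)$ (valid because the images' indices are strictly increasing integers), combined with the two digit-counting estimates: agreement on the low-order components $1,\dots,i$ gives $\Delta(X,X')\ge s^i$, agreement of the images on the high-order components $i+1,\dots,n$ gives $\Delta(I(X),I(X'))<s^i$, whence $s^i\le\Delta(X,X')\le\Delta(I(X),I(X'))<s^i$, a contradiction; all the inequalities indeed point the right way. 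As for what each approach buys: the paper's reduction is shorter given Proposition~\ref{prop:direct} and makes the network interpretation immediate, but it leans on two small facts left implicit --- that the partial inverse can be completed to a distance-compatible mapping on all of $S^n$, and that the resulting program can be ``reversed'' on the relevant inputs. Your argument is self-contained, never leaves the in situ formalism, needs no completion of $I^{-1}$, and produces an explicit formula for the assignments; the price is repeating the bookkeeping of Proposition~\ref{prop:direct} with the roles of high and low components exchanged, which you handle correctly.
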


\begin{proof}
By assumption, the restriction of $I$ to $\{X_i,\dots,X_j\}$ is injective. 
Let $I^{-1}$ be a mapping  of $S^n$  whose restriction to $I(\{X_i,\dots,X_j\})$ is the inverse of $I$, and completed so that 
$I^{-1}$ is distance-compatible.
Applying Proposition \ref{prop:direct} to $I^{-1}$
provides a sequence of assignments performing $I^{-1}$ with signature $1,\dots, n$.
Since $I$ is injective on $\{X_i,\dots,X_j\}$, those assignments can be reversed to provide a sequence of assignments computing the restriction of $I$ to this set of vectors.
Observe that this result can be seen more simply in terms of networks:
the in situ program of the mapping $I^{-1}$ corresponds to a routing of the reversed butterfly, reversing this routing provides directly a routing of the butterfly performing the required restriction of $I$.
\end{proof}

\begin{example}
\label{ex:dist-compat}
{\rm
%
For $S=\{0,1\}$ and $n=3$,  consider the mapping $I$ defined by 
$I(X_0)=I(X_1)=X_0$, $I(X_2)=X_1$, $I(X_3)=I(X_4)=I(X_5)=X_2$ and $I(X_6)=I(X_7)=X_3$, as shown on the following left tables.
%
The mapping $I$ is computed by the in situ program $p_1,p_2,p_3$ as defined in Proposition \ref{prop:direct} and as illustrated in the following right tables.
For consistency with Figure 1, and for better readability of the index of a vector, the vector $(x_1,...,x_n)$ is written in reversed order in columns of the tables: from $x_3$ to $x_1$.

\smallskip
{\ptirm
\hfil
\begin{tabular}{|ccc|}
\hline
 $\scriptstyle x_3$ & $\scriptstyle x_2$ & $\scriptstyle x_1$ \cr
\hline
 0 & 0 & 0 \cr
 0 & 0 & 1 \cr
 0 & 1 & 0 \cr
 0 & 1 & 1 \cr
 1 & 0 & 0 \cr
 1 & 0 & 1 \cr
 1 & 1 & 0 \cr
 1 & 1 & 1 \cr
\hline
\end{tabular}
$I\atop \rightarrow$
\begin{tabular}{|ccc|}
\hline
 $\scriptstyle y_3$ & $\scriptstyle y_2$ & $\scriptstyle y_1$ \cr
\hline
 0 & 0 & 0 \cr
 0 & 0 & 0 \cr
 0 & 0 & 1 \cr
 0 & 1 & 0 \cr
 0 & 1 & 0 \cr
 0 & 1 & 0 \cr
 0 & 1 & 1 \cr
 0 & 1 & 1 \cr
\hline
\end{tabular}
\hskip 1.2cm
%
\begin{tabular}{|ccc|}
\hline
 $\scriptstyle x_3$ & $\scriptstyle x_2$ & $\scriptstyle x_1$ \cr
\hline
 0 & 0 & 0 \cr
 0 & 0 & 1 \cr
 0 & 1 & 0 \cr
 0 & 1 & 1 \cr
 1 & 0 & 0 \cr
 1 & 0 & 1 \cr
 1 & 1 & 0 \cr
 1 & 1 & 1 \cr
\hline
\end{tabular}
$p_1\atop \rightarrow$
\begin{tabular}{|ccc|}
\hline
 $\scriptstyle x_3$ & $\scriptstyle x_2$ & $\scriptstyle y_1$ \cr
\hline
 0 & 0 & 0 \cr
 0 & 0 & 0 \cr
 0 & 1 & 1 \cr
 0 & 1 & 0 \cr
 1 & 0 & 0 \cr
 1 & 0 & 0 \cr
 1 & 1 & 1 \cr
 1 & 1 & 1 \cr
\hline
\end{tabular}
$p_2\atop \rightarrow$
\begin{tabular}{|ccc|}
\hline
 $\scriptstyle x_3$ & $\scriptstyle y_2$ & $\scriptstyle y_1$ \cr
\hline
 0 & 0 & 0 \cr 
 0 & 0 & 0 \cr
 0 & 0 & 1 \cr
 0 & 1 & 0 \cr
 1 & 1 & 0 \cr
 1 & 1 & 0 \cr
 1 & 1 & 1 \cr
 1 & 1 & 1 \cr
\hline
\end{tabular}
$p_3\atop \rightarrow$
\begin{tabular}{|ccc|}
\hline
 $\scriptstyle y_3$ & $\scriptstyle y_2$ & $\scriptstyle y_1$ \cr
\hline
 0 & 0 & 0 \cr
 0 & 0 & 0 \cr
 0 & 0 & 1 \cr
 0 & 1 & 0 \cr
 0 & 1 & 0 \cr
 0 & 1 & 0 \cr
 0 & 1 & 1 \cr
 0 & 1 & 1 \cr
\hline
\end{tabular}
\hfil
}
\bigskip
}
\end{example}

\begin{definition}
\label{def:part-seq}
{\rm
We call {\it partition-sequence of $S^n$}
 a sequence $$P=(P_0,P_1,\dots,P_k)$$  of subsets of
$S^n$, for some integer $k\geq 0$,  such that the non-empty subsets in the sequence form a partition of $S^n$.
Then, we denote by $I_P$ the mapping on $S^n$ which maps 
 $X_0,\dots,X_{s^n-1}$ respectively to
 
$$
\overbrace{X_0,\dots, X_0,}^{|P_0|}
\overbrace{X_1,\dots, X_1,}^{|P_1|}
\dots,
\overbrace{X_k,\dots, X_k}^{|P_k|}.$$
Observe that $I_P$  is well defined since the sum of sizes of the subsets equals $s^n$,
and that $I_P$ depends only on the sizes of the subsets and their ordering.
Observe also that if no subset is empty, then $I_P$ is distance-compatible since, by construction, $\Delta(I(X_{a}),I(X_{a+1}))\le 1$ for every $a$.
}
\end{definition}

\begin{example}
\label{ex:part-seq}
{\rm
The mapping $I$ from Example \ref{ex:dist-compat} equals $I_{\tilde P}$ for the partition-sequence $\tilde P=(\tilde P_0,\tilde P_1,\tilde P_2,\tilde P_3)$ of $\{0,1\}^3$ such that $[\gl \tilde P_0\gl,\gl \tilde P_1\gl,\gl \tilde P_2\gl,\gl \tilde P_3\gl]=[2,1,3,2]$.
}
\end{example}

\begin{definition}
\label{def:decomp}
{\rm
Let $E$ be a mapping on $S^n$,
and $P=(P_0,P_1,\dots,P_k)$ be a partition-sequence of $S^n$
whose underlying partition of $S^n$ is
given by the inverse images of $E$, that is precisely:
for every $0\leq i\leq k$, if $P_i\not=\emptyset$ then  there exists (a unique) $y_i\in S^n$ such that $P_i=E^{-1}(y_i)$.
Then, we call \emph{\trio of $E$} a triple of mappings $(F,I,G)$ on $S^n$ such that:

\begin{itemize}
\item $I$ is the mapping $I_P$;
\item $G$ is bijective and maps the set $P_i$ onto the set $I^{-1}(X_i)$, for every $0\leq i\leq k$\break (it is arbitrary within each set $P_i$);
\item $F$ is bijective and maps $X_i$ to $y_i$, for every $0\leq i\leq k$ such that  $P_i\not=\emptyset$\break (it is arbitrary for other values $X_i$). 
\end{itemize}
By construction, we have $$E=F\circ I\circ G.$$
}
\end{definition}


Using this construction with no empty subset in the sequence $P$, we obtain Theorem \ref{cor:5n} below, which significantly improves the result of \cite{BuMo04} where boolean mappings on $\{0,1\}^n$ are computed in $n^2$ steps. This result is similar, in terms of in situ programs, to the result of \cite{Of65} for boolean mappings, as presented in~\cite{Hw04}.


\begin{theorem} \label{cor:5n}
For every finite set $S$, every mapping $E$ on $S^n$
can be computed by an in situ program of signature
$1\dots n\dots 1\dots n\dots 1\dots n$ and length $5n-4$ the following way: 
 
 \begin{itemize}

\item Consider any \trio $(F,I,G)$ of $E$ with no empty subset in the sequence $P$

\item Use Theorem~\ref{th:bijective} to compute $G$ (resp. $F$) by a program of signature $1\dots n\dots 1$ (resp. $n\dots 1\dots n$).

\item  Use Proposition~\ref{prop:direct}, to compute $I$  by a program of signature $1\dots n$.

\item Reduce into one assignment the consecutive assignments operating on the same component.
\end{itemize}

In terms of MIN, we get a routing of 
$\R\gl\B\gl\R\gl\B\gl\R$  performing $E$,
and a multicast routing of $\B\gl\R\gl\B\gl\R\gl\B$ performing $E^{-1}$.
\end{theorem}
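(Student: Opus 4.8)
The plan is to assemble the program for $E$ by composing the three pieces of a \trio $(F,I,G)$ guaranteed by Definition~\ref{def:decomp}, for which we already have the identity $E=F\circ I\circ G$. Since $F$ and $G$ are bijective and $I=I_P$ is distance-compatible (because no subset in $P$ is empty), each factor can be computed by an in situ program using results stated earlier in the excerpt. First I would invoke Theorem~\ref{th:bijective} to compute $G$ by a program of signature $1\dots n\dots 1$; then Proposition~\ref{prop:direct} to compute $I$ by a program of signature $1\dots n$; then Theorem~\ref{th:bijective} again to compute $F$, but now oriented as $n\dots 1\dots n$ (the symmetric version of the bijective result, obtained by relabelling components $i\leftrightarrow n+1-i$, or equivalently by reading the Bene\v s network the other way around). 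Concatenating these three programs in the order $G$, then $I$, then $F$ realises $E$.

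Next I would count and normalise the lengths. The three signatures are $1\dots n\dots 1$ (length $2n-1$), $1\dots n$ (length $n$), and $n\dots 1\dots n$ (length $2n-1$), for a raw total of $(2n-1)+n+(2n-1)=5n-2$. The point of the final bullet is that at each junction between consecutive programs, two assignments act on the \emph{same} component back to back: the program for $G$ ends on component $1$ while the program for $I$ begins on component $1$, and the program for $I$ ends on component $n$ while the program for $F$ begins on component $n$. As noted in the definition of the Bene\v s network, two successive assignments on one component can always be merged into a single one. Merging at these two seams removes two assignments, giving the claimed length $5n-4$ and the concatenated signature $1\dots n\dots 1\dots n\dots 1\dots n$.

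Finally, for the network reformulation, I would use the dictionary established in Section~\ref{sec:MIN}: a signature $1\dots n\dots 1$ corresponds to a routing in the Bene\v s network $\R\gl\B$, a signature $1\dots n$ to a routing in $\R$, and $n\dots 1\dots n$ to a routing in $\B\gl\R$. Concatenating these MINs and identifying the shared middle stages at each merge (mirroring the assignment merges above) yields $\R\gl\B\gl\R\gl\B\gl\R$, and the routing just constructed performs $E$ on it. For the statement about $E^{-1}$, I would reverse the whole network: reversing $\R\gl\B\gl\R\gl\B\gl\R$ gives $\B\gl\R\gl\B\gl\R\gl\B$ (since $\R$ is the reverse of $\B$), and as observed for bijections just after Theorem~\ref{th:bijective}, reversing a routing that performs a mapping yields a routing that performs its inverse in the reversed (multicast) direction.

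I do not expect a genuine obstacle here, since every ingredient is already in place; the statement is essentially a bookkeeping assembly. The one point deserving care is the seam-merging argument for the length count: one must check that the endpoints of adjacent subprograms really do act on a common component so that exactly two merges occur, bringing $5n-2$ down to $5n-4$, and confirm that merging does not disturb the correctness of the composed computation (which it cannot, since replacing $x_i:=\f;\,x_i:=\f'$ by the single assignment computing $\f'$ after $\f$ leaves the overall mapping unchanged).
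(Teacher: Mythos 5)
Your proposal is correct and takes essentially the same route as the paper's proof: decompose $E=F\circ I\circ G$ via a \trio with no empty subsets (so $I=I_P$ is distance-compatible), compute $G$, $I$, $F$ by programs of signatures $1\dots n\dots 1$, $1\dots n$, $n\dots 1\dots n$ using Theorem~\ref{th:bijective} and Proposition~\ref{prop:direct}, concatenate, and merge the two seams to reach length $5n-4$. The extra details you supply (the relabelling argument justifying the $n\dots 1\dots n$ form of Theorem~\ref{th:bijective}, the explicit $5n-2\to 5n-4$ count, and the network reversal giving the multicast routing for $E^{-1}$) are exactly what the paper leaves implicit, and they are all sound.
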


\begin{proof}
Consider any \trio $(F,I,G)$ of $E$ with no empty subset in the sequence $P$. Then the mapping $I$ is distance compatible, as already observed.
By Theorem~\ref{th:bijective}, $G$ (resp. $F$) can be computed by a program of signature $1\dots n\dots 1$ (resp. $n\dots 1\dots n$).
By Proposition~\ref{prop:direct}, $I$ is computed by a program of signature $1\dots n$. By composition and by reducing two successive assignments of the same variable in one, $E$ is computed by a sequence of $5n-4$ assignments of signature $1\dots n\dots 1\dots n\dots 1\dots n$.
\end{proof}
\bigskip

Now we can refine Theorem \ref{cor:5n} to get Theorem \ref{cor:4ngene} below.
This modification is similar to that noticed in \cite{Th78}
about \cite{Of65}, 
and is similar to the construction of \cite{LiCh99} in terms of boolean baseline networks.
Observe that this refinement is an improvement in terms of number of assignments (number of routing edges in the MIN setting), but not in terms of flexibility, since the ordering of subsets in the  sequence $P$ of the \trio of $E$ has now to be the same than the ordering of the outputs of $E$.

\begin{theorem} \label{cor:4ngene}
For every finite set $S$, every mapping $E$ on $S^n$
can be computed by an in situ program of signature
$1\dots n\dots 1\dots n\dots 1$ and length $4n-3$ the following way: 
 
 \begin{itemize}

\item Consider a \trio $(F,I,G)$ of $E$ with no empty subset in the sequence $P=(P_0,...,P_k)$ and such that $P_i=E^{-1}(y_i)$ where $y_0,...,y_k$ are the images of $E$ in increasing ordering

\item Use Theorem~\ref{th:bijective} to compute $G$ by a program of signature $1\dots n\dots 1$.

\item Use Proposition~\ref{prop:direct}, to compute $I$  by a program of signature $1\dots n$.

\item Use Corollary~\ref{cor:direct}, to compute the restriction of $F$ to the image of $I\circ G$ by a program of signature $n\dots 1$.

\item Reduce into one assignment the consecutive assignments operating on the same component.

\end{itemize}

In terms of MIN, we get a routing of 
$\R\gl\B\gl\R\gl\B$  performing $E$,
and a multicast routing of $\B\gl\R\gl\B\gl\R$ performing $E^{-1}$.
\end{theorem}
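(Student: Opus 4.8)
The plan is to follow the proof of Theorem \ref{cor:5n} almost verbatim, the only---but decisive---change being that the bijection $F$ is no longer computed in full by Theorem \ref{th:bijective} (which would cost $2n-1$ assignments), but only on the image of $I\circ G$ by the much cheaper Corollary \ref{cor:direct} (costing $n$ assignments). First I would fix a \trio $(F,I,G)$ as prescribed, so that $E=F\circ I\circ G$ with $P=(P_0,\dots,P_k)$, $P_i=E^{-1}(y_i)$, and the images $y_0,\dots,y_k$ listed by increasing index. Since $P$ has no empty subset, $I=I_P$ is distance-compatible, and Proposition \ref{prop:direct} computes it by a program of signature $1\dots n$ and length $n$. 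The bijection $G$ is computed by Theorem \ref{th:bijective} with signature $1\dots n\dots 1$ and length $2n-1$.

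The heart of the argument is the computation of $F$, and here I would argue as follows. Because $G$ is a bijection, the image of $I\circ G$ coincides with the image of $I=I_P$, which---$P$ having no empty subset---is precisely the block of consecutive vectors $X_0,\dots,X_k$. On this block $F$ acts by $X_i\mapsto y_i$, and by the chosen ordering the indices $y_0<\dots<y_k$ increase strictly with $i$. Thus the restriction of $F$ to $\{X_0,\dots,X_k\}$ preserves the strict ordering of these consecutive vectors, which is exactly the hypothesis of Corollary \ref{cor:direct}. That corollary then produces a program of signature $n\dots 1$ and length $n$ computing this restriction---and computing only the restriction suffices, since every vector presented to the $F$-stage is by construction an element of the image of $I\circ G$.

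It remains to splice the three programs together, in the order $G$, then $I$, then $F$, and to count. The concatenated signatures read $(1\dots n\dots 1)(1\dots n)(n\dots 1)$; at the $G$/$I$ junction two consecutive assignments act on component $1$, and at the $I$/$F$ junction two act on component $n$, each such pair collapsing to a single assignment. This yields the announced signature $1\dots n\dots 1\dots n\dots 1$ and total length $(2n-1)+n+n-2=4n-3$. For the network statement, the signatures $1\dots n\dots 1$, $1\dots n$ and $n\dots 1$ correspond to the Bene\v s network $\R\gl\B$, to $\R$, and to $\B$ respectively, so their concatenation is a routing of $\R\gl\B\gl\R\gl\B$ performing $E$; reversing all arcs turns this into a multicast routing of $\B\gl\R\gl\B\gl\R$ performing $E^{-1}$.

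I do not anticipate a genuine obstacle, since every tool is already available; the single point demanding care is the order-preservation claim for $F$, namely that the image of $I\circ G$ is exactly the consecutive block $X_0,\dots,X_k$ and that the $y_i$ increase. This is precisely where the gain over Theorem \ref{cor:5n} resides: by committing in advance to the natural increasing order of the images one forfeits the freedom to reorder the middle groups, but in exchange one replaces a full Bene\v s routing $\B\gl\R$ of $F$ by a single butterfly routing $\B$, saving $n-1$ assignments.
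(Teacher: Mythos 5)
Your proposal is correct and follows essentially the same route as the paper's own proof: the same $P$-factorisation $E=F\circ I\circ G$ with the groups ordered to match the increasing order of the images, Theorem~\ref{th:bijective} for $G$, Proposition~\ref{prop:direct} for $I$, and Corollary~\ref{cor:direct} for the restriction of $F$ to the image of $I\circ G$, which is exactly the point the paper's (much terser) proof singles out. Your additional verification that this image is the consecutive block $X_0,\dots,X_k$ and that the $y_i$ increase, together with the explicit merge-and-count, is just a fuller write-up of the same argument.
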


\begin{proof}
The proof is similar to that of Theorem \ref{cor:5n} except for the bijection $F$.
Here, by the choice of $P$, 
the restriction of $F$ to the image of $I\circ G$ maps consecutive vectors onto a set of vectors preserving the ordering. Hence
it satisfies the hypothesis of Corollary \ref{cor:direct}.
\end{proof}

\begin{figure}[htbh]
\centerline{\includegraphics[scale=1.6]{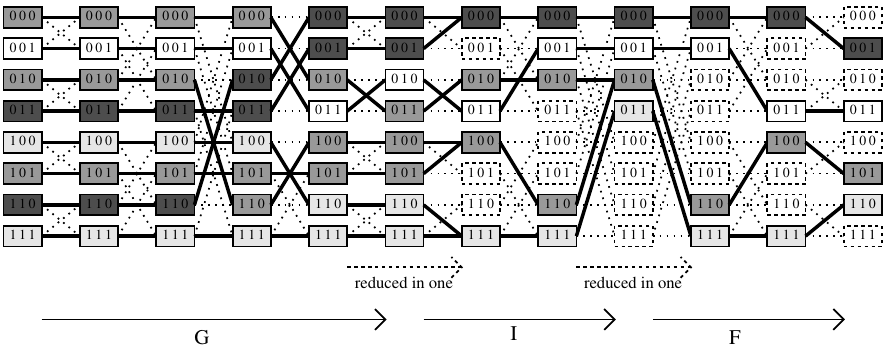}}
\centerline{ \small{ \bf Figure 2.}}
\end{figure}

\begin{example}
{\rm
Figure 2 
gives an example for the construction of Theorem \ref{cor:4ngene}.
The elements of $\{0,1\}^3$ are grouped by the bijection $G$ at stage 6, in the same ordering than the images of $E$.
Then, at stage 9 all elements with same final image have been given a same image by $I$, again in the same ordering than the images of $E$. Hence, at last, the restriction of bijection $F$ can finalize the mapping $E$ in 3 stages only.
}
\end{example}

\begin{remark}
\label{rk:register}
{\rm
To end this section, let us notice that, due to the fact that successive assignments operate on consecutive components,
successive assignments of type $S^{mn}\rightarrow S$
can be grouped in assignments of
fewer variables on a larger base set $S^m$
defining 
successive mappings  $S^{mn}\rightarrow S^{m}$:
$$\underbrace{f_{nm},\ldots,f_{n.(m-1)+1}}_{\tilde f_n} ,\ldots, \underbrace{f_m,\ldots f_2,f_1,g_2,\ldots, g_m}_{\tilde f_1}, \ldots, \underbrace{g_{n.(m-1)+1},\ldots,g_{nm}}_{\tilde g_n}. $$
Hence, for instance, the case $S=\{0,1\}^m$ can be reduced to the case $S=\{0,1\}$. This is a particular case of the register integrability property
described~in~\cite{BuGi08}: the signatures of type $1,\dots,n,\dots,1,\dots$ allow to adapt in situ programs to memory registers with non-constant sizes.
}
\end{remark}

\begin{open problem}
\label{open:length}
{\rm
Up to our knowledge, no better bound than $4n-3$ is known for the case of arbitrary mappings. It would be interesting to improve this bound, and to find the best possible general bound. Experiments on a computer make us think that the factor $4$ could be replaced with a factor $3$.
Note that the in situ program may not have a signature with consecutive  indices, or, in other words, that other combinations of assignment networks than the butterfly network may be used.
}
\end{open problem}




\section{A more flexible new method}
\label{sec:boolean}


The more involved method given here 
is a refinement of the method 
 given by Theorem \ref{cor:5n}.
It is completely new with respect to network literature, and had been presented in the preliminary conference paper \cite{BuGiTh09}.
It provides the same number of assignments than Theorem \ref{cor:4ngene} and a better flexibility.


We still use a \trio $(F,I,G)$ but the sequence $P$ will possibly contain empty sets, and will be suitably ordered with respect to the
sizes of its elements, in order to satisfy some boolean arithmetic properties.
So doing, the intermediate mapping $I=I_P$ will have a so-called suffix-compatibility property.
We show that the composition of a mapping having this property
with any in situ program with signature $1,\dots, n$ can be also computed in $n$ steps.
Hence the composition of $I$ with the first $n$ steps of the in situ program of  the bijection $F$
can also be computed with $n$ assignments, performing the computation of $F\circ I$ in $2n-1$ steps instead of $3n-2$.

Each intermediate result in this section provides a new result on in situ programs with signature $1, \dots, n$, or equivalently on routing properties of the butterfly network. 
In terms of multicast routing strategy, the flexibility of this method comes from the freedom one has in building a suitable ordering for the sequence $P$, as detailed in Remark \ref{flexible}.

In the whole section, we will fix $S=\{0,1\}$.
The method is given here when $S=\{0,1\}$, hence it is directly available, by extension, when $S=\{0,1\}^m$ (cf. Remark \ref{rk:register}).
However it can be extended to a set $S$ of aribtrary size. While this paper was under publication process, it has been noticed in \cite{Ga1}
that Definition \ref{def:block-sequence} and Lemma \ref{lm:partition} below could be formulated using any integer $q=\mid S\mid$ instead of $2$, by means of an arithmetical property. Then the rest of the construction can be adapted directly.

\begin{definition} \label{def:block-sequence}
A \emph{block-sequence} $[v_0,v_1,\dots,v_{2^n-1}]$ 
is a sequence of $2^n$ non-negative integers such that,
for every $i=0\dots n$,
the sum of values in each of the consecutive \emph{blocks} of size $2^i$ is a multiple of $2^i$,
that is,  for all $0\leq j < 2^{n-i}$:
$$\sum_{j 2^i\ \leq\  l\ < \ (j+1) 2^i}v_l\ =\ 0\ \hbox{\rm mod}\ 2^i.$$
\end{definition}

\begin{lemma} \label{lm:partition}
Every sequence of $2^n$ non-negative integers whose sum equals $2^n$
can be reordered in a {block-sequence}.
%
\end{lemma}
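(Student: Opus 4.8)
The plan is to prove the statement by induction on $n$, reducing the length-$2^n$ problem to a length-$2^{n-1}$ problem by a pairing argument. The key observation is that the block-sequence conditions of Definition~\ref{def:block-sequence} are nested and compatible with halving: the conditions for blocks of size $2^i$ with $i\ge 1$ only constrain sums of consecutive pairs, so if I first arrange the $2^n$ integers into $2^{n-1}$ adjacent pairs, each having an even sum, then the remaining conditions turn into exactly the block-sequence conditions for the sequence of half-sums.

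First I would dispose of the base case $n=0$: the single integer must equal $2^0=1$, and the only requirement (blocks of size $1$) holds trivially. For the inductive step I would start from $2^n$ non-negative integers with sum $2^n$. Since $n\ge 1$ the sum is even, so the number of odd integers among them is even; hence I can partition the multiset into $2^{n-1}$ pairs, each consisting of two integers of the same parity (odd with odd, even with even). Every such pair has an even sum, which I write as $2w_j$ with $w_j\ge 0$. The sequence $[w_0,\dots,w_{2^{n-1}-1}]$ then consists of $2^{n-1}$ non-negative integers whose sum is $2^n/2=2^{n-1}$, so the induction hypothesis applies and I can reorder it into a block-sequence of length $2^{n-1}$.

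I would then order the original pairs according to this reordering of the $w_j$ (the order inside each pair being irrelevant) and verify that the resulting length-$2^n$ sequence is a block-sequence. The condition for $i=0$ is trivial, and the condition for $i=1$ holds because each pair was chosen to have even sum. For $2\le i\le n$, a block of size $2^i$ consists of $2^{i-1}$ consecutive pairs, so its sum equals $2$ times the sum of the corresponding $2^{i-1}$ consecutive values $w_j$; the latter sum is divisible by $2^{i-1}$ since the $w$-sequence is a block-sequence, and hence the block sum is divisible by $2^i$. This establishes all the required congruences.

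The only genuinely delicate points are the parity count guaranteeing that a same-parity pairing exists (which uses that the total sum is even), and the index bookkeeping showing that each block-sequence condition at level $i$ for the original sequence corresponds precisely to the condition at level $i-1$ for the half-sum sequence, the extra factor of $2$ supplying exactly the missing power of two. Everything else is routine. An alternative, less self-contained route would split the multiset directly into two halves of size $2^{n-1}$, each summing to a multiple of $2^{n-1}$ (the existence of such a half following from the Erd\H{o}s--Ginzburg--Ziv theorem with $m=2^{n-1}$), and then recurse on each half; I would prefer the pairing argument above, since it is elementary and keeps the sums exactly equal to powers of two throughout the recursion.
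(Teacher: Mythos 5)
Your proof is correct and takes essentially the same route as the paper: the paper's proof builds the ordering bottom-up by repeatedly pairing same-parity blocks and replacing each pair of values $v,v'$ by the half-sum $(v+v')/2$, which is exactly your pairing-and-recursing argument on the half-sum sequence $[w_0,\dots,w_{2^{n-1}-1}]$, just phrased iteratively (level by level) rather than as an induction on $n$. The only difference is presentational: you make explicit the parity count and the divisibility bookkeeping that the paper's one-line justification (``the sum of values of blocks at level $i$ is $2^{n-i}$'') leaves implicit.
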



\begin{proof}
The ordering is built inductively.
Begin at level $i=0$ with $2^n$ blocks of size 1 having each value in the sequence.
At level $i+1$,  form consecutive pairs of blocks $[B,B']$ that have values $v,v'$ of same parity and define the value of this new block to be $(v+v')/2$.
Each new level doubles the size of blocks and divides their number by 2.
The construction is valid since the sum of values of blocks at level $i$ is $2^{n-i}$.
\end{proof}



%

\begin{example}
\label{ex:blocks}
{\rm
We illustrate below the process described in the proof of Lemma \ref{lm:partition} ($n=4$ and each block has its value as an exponent):

\begin{center}
\begin{tabular}{c}
$[4]^4, [1]^1, [1]^1, [1]^1, [1]^1, [1]^1,[1]^1,[3]^3,[3]^3, [0]^0,[0]^0,[0]^0,[0]^0,[0]^0,[0]^0,[0]^0$\cr
$[4,0]^2, [1, 1]^1, [1, 1]^1, [1,1]^1,[3,3]^3,[0,0]^0,[0,0]^0,[0,0]^0$\cr
$[4,0,0,0]^1, [1,1,3,3]^2, [1, 1,1, 1]^1, [0,0,0,0]^0$\cr
$[4,0,0,0,1, 1,1, 1]^1, [1,1,3,3,0,0,0,0]^1$\cr
$[4,0,0,0,1, 1,1, 1,1,1,3,3,0,0,0,0]^1$\cr
\end{tabular}
\end{center}
}
\end{example}

\begin{remark}
\label{flexible}
{\rm
Let us anticipate the sequel of the detailed construction and already explain roughly in what respect the use of block-sequences will allow a more flexible routing strategy than the classical construction recalled in Section \ref{sec:general}.
As shown before, this construction consists in grouping and sorting
pre-images of the mapping at some middle stage of the in situ program.
The ordering of these pre-images is determined and thus the sequence of assignments
is completely constrained at this middle stage.
There is essentially \emph{one} available sorting to get the $4n-3$ length (up to a few possible shifts along the ordering).

In the more involved construction given in the present section, we will apply Lemma \ref{lm:partition} to the sequence of integers given by the cardinalities of the pre-images of the mapping to compute.
We will prove later that \emph{any} ordering of the pre-images whose cardinalities satisfy the block-sequence property can be used at this middle stage.
The point is that, given a sequence of integers, there is a number of such possible orderings given by block-sequences, built as in the proof of Lemma \ref{lm:partition}, and also a number of possible associations between the pre-images and those cardinality integers.

For instance, a given a block-sequence ordering can be   represented as a bracket system, forming a binary tree,  the following way, continuing Example \ref{ex:blocks}:

\begin{center}
\begin{tabular}{ccccccccccccccccccccccccccccccccc}

[&4&  ,&  0&  ,&  0&  ,&  0&  ,&  1&  ,&   1&  ,&  1&  ,&   1&  ,&  1&  ,&  1&  ,&  3&  ,&  3&  ,&  0&  ,&  0&  ,&  0&  ,&  0&]\\

[&4&  ,&  0&  ,&  0&  ,&  0&  ,&  1&  ,&   1&  ,&  1&  ,&   1&],[&1&  ,&  1&  ,&  3&  ,&  3&  ,&  0&  ,&  0&  ,&  0&  ,&  0&]\\

[&4&  ,&  0&  ,&  0&  ,&  0&],[&1&  ,&   1&  ,&  1&  ,&   1&],[&1&  ,&  1&  ,&  3&  ,&  3&],[&0&  ,&  0&  ,&  0&  ,&  0&]\\

[&4&  ,&  0&],[&0&  ,&  0&],[&1&  ,&   1&],[&1&  ,&   1&],[&1&  ,&  1&],[&3&  ,&  3&],[&0&  ,&  0&],[&0&  ,&  0&]\\

[&4&],[&1&],[&1&],[&1&],[&1&],[&1&],[&1&],[&3&],[&3&],[&0&],[&0&],[&0&],[&0&],[&0&],[&0&],[&0&]\\

\end{tabular}

\end{center}

Then, one can always permute \emph{any} sons of a node in the tree and still get a block-sequence.
So, from one block-sequence, one can obtain potentially a number of available block-sequences (there are $2^{2^n-1}$  permutations of the leaves obtained by this way for such a tree with $2^n$ leaves).
For instance, making such permutations in Example \ref{ex:blocks} leads to the following possible block-sequences (brackets have been added to identify blocks that have been permuted):

\hskip 2cm
\vbox{
$[4,0,0,0,1, 1,1, 1,1,1,3,3,0,0,0,0]$;

$[\ [1,1,3,3,0,0,0,0]\ ,\ [4,0,0,0,1, 1,1, 1]\ ]$;

$[\ [1, 1,1, 1]\ ,\ [4,0,0,0]\ ,1,1,3,3,0,0,0,0]$;

$[\ [1, 1,1, 1]\ ,\ [0,0,\ [0]\ ,\ [4]\ ]\ ,\ [0,0,0,0]\ ,\ [1,1,3,3]\ ]$;

etc.
}

Of course, various possible block-sequences may be obtained independently from such permutations.
Moreover, given a block-sequence, pre-images having same cardinality may be associated with \emph{any} occurrence of the corresponding integer in the sequence, leading to a number of possibilities for building the in situ program.
For instance, if every pre-image has size 2, then the sequence of integers to consider is $[2,2,\ldots,2, 2, 0, 0 , \ldots,0,0]$, which is already a block-sequence and is  invariant under permutation of the non-zero integers. In this case, \emph{any} ordering of the pre-images can be used at the middle stage
to provide  finally a $4n-3$ length program.

All those  ``\emph{any}'' in this new construction, compared with the ``\emph{one}'' in the known construction, witness how the method is more flexible.


}
\end{remark}

\begin{definition}\label{def:suffix}
{\rm
For a vector $(x_1,\dots,x_n)$,
we call {\it prefix of order $k$}, resp. {\it suffix of order $k$},
the vector $(x_1,\dots, x_k)$,
resp. $(x_k,\dots, x_n)$.
¨

A mapping $I$ of $\{0,1\}^n$ is called
\emph{suffix-compatible} if,
for every $1\leq k\leq n$, if two vectors $X,X'$ have same suffixes
of order $k$, then their images $I(X),I(X')$
also have same suffixes of order $k$.
}
\end{definition}

\begin{lemma}\label{lm:suffix}
Let $P=(P_0,P_1,\dots,P_{2^n-1})$ be a partition-sequence of
$\{0,1\}^n$ such that  $[\gl P_0\gl,\gl P_1\gl,\dots,\gl P_{2^n-1}\gl]$ is a block-sequence. Then the mapping $I_P$ on $\{0,1\}^n$
is suffix-compatible.
\end{lemma}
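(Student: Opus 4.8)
The plan is to translate suffix-compatibility into an arithmetic statement about the index encoding and then feed in the block-sequence property. Recall that the index of $(x_1,\dots,x_n)$ is $x_1+2x_2+\dots+2^{n-1}x_n$, so $x_1$ is the least significant bit and the suffix $(x_k,\dots,x_n)$ records exactly the bits of weight $2^{k-1},\dots,2^{n-1}$. Hence two vectors $X_a,X_{a'}$ share the same suffix of order $k$ if and only if $\lfloor a/2^{k-1}\rfloor=\lfloor a'/2^{k-1}\rfloor$, i.e. $a$ and $a'$ lie in a common block $[m2^{k-1},(m+1)2^{k-1})$ of $2^{k-1}$ consecutive indices; the same equivalence applies to the images. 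So, writing $i=k-1$, suffix-compatibility is equivalent to the statement: for every $0\le i\le n-1$, each block of $2^i$ consecutive inputs is sent by $I_P$ into a single block of $2^i$ consecutive outputs.

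Next I would describe $I_P$ through partial sums. Put $\sigma_j=\sum_{l=0}^{j-1}|P_l|$ with $\sigma_0=0$, and write $I_P(X_a)=X_{\iota(a)}$. Then $\iota(a)=j$ exactly when $\sigma_j\le a<\sigma_{j+1}$, so $\iota$ is a nondecreasing step function of $a$ whose jumps occur precisely at the values $\sigma_j$. The key step, and the only place the hypothesis enters, is the claim that every boundary $\sigma_{m2^i}$ is a multiple of $2^i$: indeed $\sigma_{m2^i}=\sum_{m'=0}^{m-1}\bigl(\sum_{m'2^i\le l<(m'+1)2^i}|P_l|\bigr)$, and each inner sum is a block sum of size $2^i$, hence $\equiv 0\pmod{2^i}$ by Definition \ref{def:block-sequence}; a sum of multiples of $2^i$ is again a multiple of $2^i$.

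Finally I would conclude by contradiction. Take $a\le a'$ with $\lfloor a/2^i\rfloor=\lfloor a'/2^i\rfloor$ and suppose $\iota(a)$ and $\iota(a')$ lie in different $2^i$-blocks; then some boundary $m^*2^i$ satisfies $\iota(a)<m^*2^i\le\iota(a')$, which by the step-function description gives $a<\sigma_{m^*2^i}\le a'$. By the key step $\sigma_{m^*2^i}$ is a multiple of $2^i$ lying strictly above $a$ and at most $a'$, forcing $\lfloor a/2^i\rfloor<\lfloor a'/2^i\rfloor$, a contradiction. This establishes the reformulated statement for every $i$ with $0\le i\le n-1$, which is exactly suffix-compatibility.

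The main obstacle is bookkeeping with the encoding: getting the correspondence between prefix/suffix order and low/high bits exactly right, so that ``same suffix of order $k$'' becomes membership in a common $2^{k-1}$-block, and recognizing that the content one must extract from the block-sequence property is not the individual block sums but the derived fact that every $2^i$-aligned cumulative sum $\sigma_{m2^i}$ is itself a multiple of $2^i$.
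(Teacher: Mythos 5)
Your proof is correct and is essentially the paper's own argument: your key step --- that the $2^i$-aligned cumulative sums $\sigma_{m2^i}$ are multiples of $2^i$ --- is exactly the content of the paper's claim (i) that $I_P^{-1}$ sends each level-$i$ block $V_{i,j}=\{X_l : j2^i\le l<(j+1)2^i\}$ to a union of consecutive level-$i$ blocks, which the paper establishes by induction on $j$ where you sum directly. Your initial translation (same suffix of order $k$ $\Leftrightarrow$ same $\lfloor\cdot/2^{k-1}\rfloor$, i.e.\ same index block) and your concluding step-function contradiction correspond to the paper's part (ii), so the two proofs differ only in bookkeeping, not in substance.
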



\begin{proof} 
The sketch of the proof is the following.
First, define the $j$-th block of level $i$ of $\{0,1\}^n$
as the set of vectors whose part has index ${j2^i\leq l< (j+1)2^i}$.
Observe that the inverse image by $I_P$ of a block
is a union of consecutive blocks of same level.
The result follows. 

Let us now detail the proof.
For $0\leq i\leq n$ and $j\in [2^{n-i}]$, define
the $j$-th block at level $i$ of $\{0,1\}^n$ as $$V_{i,j}=\{X_l: l\in[j2^i, (j+1)2^i-1]\}.$$  

(i) First, we prove that, for every $i,j$ as above, there exists
 $k,k'\in [2^{n-i}]$, such that $$I_P^{-1}(V_{i,j})=\bigcup_{k\leq l\leq k'} V_{i,l}.$$

Let us call interval of $\{0,1\}^n$ the set of vectors $X_l$ for $l$ belonging to an interval of $[2^n]$.
First, notice that
the inverse image by $I_P$ of an interval of $\{0,1\}^n$ is an interval
of $\{0,1\}^n$.
By definition of $I_P$,
we have $\left| I_P^{-1}(V_{i,j}) \right| = \sum_{j2^i\leq l< (j+1)2^i}v_l$.
Remark that $I_P^{-1}(V_{i,j})$ may be empty, when $v_l=0$ for all $l\in [j2^i,(j+1)2^i]$.
Since $[v_0,\dots,v_{2^n-1}]$ is a block sequence, we have
$\sum_{j2^i\leq l< (j+1)2^i}v_l\ =0\mod 2^i$.
Hence,  $\left| I_P^{-1}(V_{i,j}) \right| =0\mod 2^i$.

For a fixed $i$, we prove the result by induction on $j$.
If $j=0$ then $\left| I_P^{-1}(V_{i,0})\right| = k.2^i$ for some $k\in [2^{n-i}]$.
If $I_P^{-1}(V_{i,0})$ is not empty, then it is an interval of $\{0,1\}^n$ containing $(0,...,0)$
by definition of $I_P$. Since this interval has a size $k.2^i$ multiple of $2^i$,
it is of the form $\bigcup_{0\leq l\leq k} V_{i,l}$.

If the property is true for all $l$ with $0\leq l<j$,
then $I_P^{-1}\bigl(\bigcup_{0\leq l< j}V_{i,l})=\bigcup_{0\leq l\leq j'} V_{i,l}$.
Since $\left| I_P^{-1}(V_{i,j})\right| = k.2^i$ for some $k\in [2^{n-i}]$,
we must have $I_P^{-1}\bigl(\bigcup_{0\leq l\leq j}V_{i,l})=\bigcup_{0\leq l\leq j'+k} V_{i,l}$,
hence $I_P^{-1}(V_{i,j})=\bigcup_{j'< l\leq j'+k'} V_{i,l}.$
\bigskip

(ii) Now, we prove the lemma.
Assume $a=(a_1,...,a_n)$ and $b=(b_1,...,b_n)$ have same suffix of order $i$.
For all $l\geq i$ we have
$a_l=b_l$. Let $c\in S^n$ be defined by
$c_n=a_n=b_n,\ldots, c_i=a_i=b_i, c_{k-1}=0,\ldots, c_1=0$.
Let $\phi(x)$ denote the index of vector $x$.
We have $\phi(c)=0\mod 2^{i-1}$, that is $\phi(c)=j.2^{i-1}$ for some $j\in [2^{n-i+1}]$.
And $\phi(a)$ and $\phi(b)$ belong to the same interval $[j.2^{i-1}, (j+1).2^{i-1}-1]$ whose elements have same components for $l\geq i$.
That is $a$ and $b$ belong to $V_{i-1,j}$. 
By (i), the inverse images of intervals of type $V_{i-1,k}$
by $I_P$ are unions of such consecutive intervals.
Hence the image of an interval $V_{i-1,j}$ by $I_p$ is an interval contained
in an interval $V_{i-1,k}$ for some $k\in [2^{n-i+1}]$.
Hence $I_P(a)$ and $I_P(b)$ have same components $l\geq i$.
\end{proof}

%
%
%
%
%


\smallskip

\begin{example}
\label{ex:suffix-comp}
{\rm
First consider again the mapping $I_{\tilde P}$ from Examples \ref{ex:dist-compat} and \ref{ex:part-seq} obtained from the partition-sequence $\tilde P=(\tilde P_0,\tilde P_1,\tilde P_2,\tilde P_3)$
of $\{0,1\}^3$ such that $[\gl \tilde P_0\gl,\gl \tilde P_1\gl,\gl \tilde P_2\gl,\gl \tilde P_3\gl]=[2,1,3,2]$, which is not a block-sequence.
Observe that this mapping is not suffix-compatible, since
$(x_1,x_2,x_3)=(0,1,0)$ and  $(x'_1,x'_2,$ $x'_3)=(1,1,0)$ have same suffix of order 2 equal to $(1,0)$,
but
$I(x_1,x_2,x_3)=(1,0,0)$ and  $I(x'_1,x'_2,x'_3)=(0,1,0)$ have not same suffix
of order $2$.

Now consider the mapping $I_{P}$ shown on the next tables, obtained from  the partition-sequence $P=(P_0,P_1,P_2,P_3)$ such that $[\gl P_0\gl,\gl P_1\gl,\gl P_2\gl,\gl P_3\gl]=[1,3,2,2]$, which is a block-sequence. Then one can check that $I_P$ is suffix-compatible, as claimed by Lemma \ref{lm:suffix}.

\hfil
{\ptirm
\noindent
\begin{tabular}{|ccc|}
\hline
 $\scriptstyle x_3$ & $\scriptstyle x_2$ & $\scriptstyle x_1$ \cr
\hline
 0 & 0 & 0 \cr
 0 & 0 & 1 \cr
 0 & 1 & 0 \cr
 0 & 1 & 1 \cr
 1 & 0 & 0 \cr
 1 & 0 & 1 \cr
 1 & 1 & 0 \cr
 1 & 1 & 1 \cr
\hline
\end{tabular}
$I_P\atop \rightarrow$
\begin{tabular}{|ccc|}
\hline
 $\scriptstyle i_3$ & $\scriptstyle i_2$ & $\scriptstyle i_1$ \cr
\hline
 0 & 0 & 0 \cr
 0 & 0 & 1 \cr
 0 & 0 & 1 \cr
 0 & 0 & 1 \cr
 0 & 1 & 0 \cr
 0 & 1 & 0 \cr
 0 & 1 & 1 \cr
 0 & 1 & 1 \cr
\hline
\end{tabular}
}
\hfil

}
\end{example}

%
%
%

\begin{proposition} \label{prop:compose}
Let $I$ be a suffix-compatible mapping  on $\{0,1\}^n$ 
and let $B$ be a mapping  on $\{0,1\}^n$
computed by an in situ program $b_1,\ldots,b_n$.
The mapping $B\circ I$  is computed by an in situ
program with signature $1,\dots, n$, namely $p_1, p_2,\ldots, p_n$, with, for $B\circ I(x_1,\ldots, x_n)=(y_1,\ldots,y_n)$: 
$$p_i(y_1,\ldots,y_{i-1},x_i,\ldots,x_n)=y_i.$$
\end{proposition}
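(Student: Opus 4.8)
The plan is to mirror the proof of Proposition~\ref{prop:direct}: in a program of signature $1,\dots,n$ each component is modified exactly once, so the assignments $p_i$ are \emph{forced}, and the only thing to establish is that they are well defined. First I would observe that, after the first $i-1$ steps, the current vector has the form $(y_1,\dots,y_{i-1},x_i,\dots,x_n)$, where $y_1,\dots,y_{i-1}$ are the first $i-1$ components of $B\circ I(x)$ and $x_i,\dots,x_n$ are the original, not-yet-touched components. Setting $y_i:=p_i(y_1,\dots,y_{i-1},x_i,\dots,x_n)$ then gives component $i$ its final value, so the program terminates on $B\circ I(x)$; correctness is immediate once each $p_i$ is a genuine function.

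Thus the whole content is to prove well-definedness: if two inputs $x,x'$ reach the same intermediate vector at step $i$, then they must yield the same value $y_i=y'_i$. Reaching the same intermediate vector means precisely that (a) $x$ and $x'$ have the same suffix of order $i$, namely $x_i=x'_i,\dots,x_n=x'_n$, and (b) $B\circ I(x)$ and $B\circ I(x')$ have the same prefix of order $i-1$, namely $y_1=y'_1,\dots,y_{i-1}=y'_{i-1}$.

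The key step is to combine the two hypotheses. From (a) and the suffix-compatibility of $I$, the vectors $u:=I(x)$ and $u':=I(x')$ have the same suffix of order $i$, i.e.\ $u_i=u'_i,\dots,u_n=u'_n$. Now I would use that $B$ is computed by the in situ program $b_1,\dots,b_n$ of signature $1,\dots,n$: applying that program to $u$, the value written on component $i$ is $y_i=b_i(y_1,\dots,y_{i-1},u_i,\dots,u_n)$, and likewise $y'_i=b_i(y'_1,\dots,y'_{i-1},u'_i,\dots,u'_n)$. By (b) the prefixes $y_1,\dots,y_{i-1}$ and $y'_1,\dots,y'_{i-1}$ coincide, and we have just seen that the suffixes $u_i,\dots,u_n$ and $u'_i,\dots,u'_n$ coincide; hence $b_i$ receives identical arguments and $y_i=y'_i$, as desired.

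The main obstacle, as in Proposition~\ref{prop:direct}, is bookkeeping: one must carefully identify which components of the working vector currently hold final ($y$) values and which still hold original ($x$) values at the instant component $i$ is modified, and recognise that the argument fed to $b_i$ when running $B$ on $I(x)$ depends only on the final prefix $(y_1,\dots,y_{i-1})$ and the suffix $(u_i,\dots,u_n)$ of $I(x)$ — the two quantities pinned down exactly by the collision conditions (a) and (b). Once this is seen, the suffix-compatibility hypothesis slots in precisely where it is needed. A clean way to organise the write-up is a short induction on $i$ showing simultaneously that $p_1,\dots,p_i$ are well defined and that the intermediate vector has the claimed shape, exactly paralleling the inductive step of Proposition~\ref{prop:direct}.
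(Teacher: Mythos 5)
Your proof is correct and takes essentially the same approach as the paper's: the assignments are forced, and well-definedness follows by combining suffix-compatibility of $I$ with the fact that, when the signature-$1,\dots,n$ program $b_1,\dots,b_n$ is run on $I(x)$, the value written on component $i$ depends only on the final prefix $(y_1,\dots,y_{i-1})$ and the untouched suffix $(u_i,\dots,u_n)$ of $I(x)$. The only cosmetic difference is that you conclude $y_i=y'_i$ directly, whereas the paper argues by contradiction (via the composite $B_i$ of the first $i$ assignments) that the full images coincide.
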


Observe that Proposition \ref{prop:compose} provides a new property of the butterfly network,  which refines the classical sorting property of this network recalled in Proposition \ref{prop:direct}.
We mention also that Proposition \ref{prop:compose} is stated and proved for a general mapping $B$, but we will use it in what follows only when $B$ is bijective.

\begin{proof}
Just as for Proposition \ref{prop:direct},
assume that $p_1,..., p_i$ are well defined by the necessary above formula,
and that, after step $i$,
two different vectors $x,x'$ are given the same image by the process whereas their final expected images $y=B\circ I(x)$ and $y'=B\circ I(x')$ were different (hence $I(x)\not=I(x')$).
By construction, $y,y'$ have a same prefix $P$ of order $i$ and
$x,x'$ have a same suffix $Q$ of order $i+1$. Moreover, since $I$ is suffix-compatible,
the vectors $I(x),I(x')$ also have a same suffix $R$ of order $i+1$.
Hence one has some relation $F(I(x))=F(uR)=y=Pv$ and $F(I(x'))=F(u'R)=y'=Pv'$, where $u,u'$ are some prefix, and $v,v'$ are some suffix. 
Let $F_i$ be the mapping defined by the first $i$ assignments defining $F$, that is dealing with components $1,\dots,i$. 
Since $F(uR)=Pv$, we have $F_i(uR)=PR$. And since $F(u'R)=Pv'$ we have $F_i(u'R)=PR$. Hence $F_i(uR)=F_i(u'R)$.
Since $F$ is computed by computing $F_i$ first, we get that
$F(uR)=F(u'R)$, that is $y=y'$, a contradiction with our assumption.
\end{proof}

\begin{example}
\label{ex:compose}
{\rm
Consider the bijective mapping $B$ computed by the program $b_1,b_2,$ $b_3$ shown on the next left tables (each step modifies one column). And consider  the suffix-compatible mapping $I_P$ from Example \ref{ex:suffix-comp} above.
Then, by Proposition \ref{prop:compose}, the composition $B\circ I_P$ 
is computed by the program $p_1,p_2,p_3$ as shown on the next right tables.
One can check that, at each step, two vectors 
which have same image through $1\leq k\leq 3$ assignments will have eventually
same images.
\smallskip

\noindent
{\ptirm
\hfil
\begin{tabular}{|ccc|}
\hline
 $\scriptstyle i_3$ & $\scriptstyle i_2$ & $\scriptstyle i_1$ \cr
\hline
 0 & 0 & 0 \cr
 0 & 0 & 1 \cr
 0 & 1 & 0 \cr
 0 & 1 & 1 \cr
 1 & 0 & 0 \cr
 1 & 0 & 1 \cr
 1 & 1 & 0 \cr
 1 & 1 & 1 \cr
\hline
\end{tabular}
$B\atop \rightarrow$
\begin{tabular}{|ccc|}
\hline
 $\scriptstyle b_3$ & $\scriptstyle b_2$ & $\scriptstyle b_1$ \cr
\hline
 0 & 0 & 0 \cr
 1 & 0 & 1 \cr
 0 & 1 & 1 \cr
 1 & 1 & 0 \cr
 0 & 1 & 0 \cr
 0 & 0 & 1 \cr
 1 & 0 & 0 \cr
 1 & 1 & 1 \cr
\hline
\end{tabular}
\hskip 1.2cm
\begin{tabular}{|ccc|}
\hline
 $\scriptstyle x_3$ & $\scriptstyle x_2$ & $\scriptstyle x_1$ \cr
\hline
 0 & 0 & 0 \cr
 0 & 0 & 1 \cr
 0 & 1 & 0 \cr
 0 & 1 & 1 \cr
 1 & 0 & 0 \cr
 1 & 0 & 1 \cr
 1 & 1 & 0 \cr
 1 & 1 & 1 \cr
\hline
\end{tabular}
$p_1\atop \rightarrow$
\begin{tabular}{|ccc|}
\hline
 $\scriptstyle x_3$ & $\scriptstyle x_2$ & $\scriptstyle y_1$ \cr
\hline
 0 & 0 & 0 \cr
 0 & 0 & 1 \cr
 0 & 1 & 1 \cr
 0 & 1 & 1 \cr
 1 & 0 & 1 \cr
 1 & 0 & 1 \cr
 1 & 1 & 0 \cr
 1 & 1 & 0 \cr
\hline
\end{tabular}
$p_2\atop \rightarrow$
\begin{tabular}{|ccc|}
\hline
 $\scriptstyle x_3$ & $\scriptstyle y_2$ & $\scriptstyle y_1$ \cr
\hline
 0 & 0 & 0 \cr
 0 & 0 & 1 \cr
 0 & 0 & 1 \cr
 0 & 0 & 1 \cr
 1 & 1 & 1 \cr
 1 & 1 & 1 \cr
 1 & 1 & 0 \cr
 1 & 1 & 0 \cr
\hline
\end{tabular}
$p_3\atop \rightarrow$
\begin{tabular}{|ccc|}
\hline
 $\scriptstyle y_3$ & $\scriptstyle y_2$ & $\scriptstyle y_1$ \cr
\hline
 0 & 0 & 0 \cr
 1 & 0 & 1 \cr
 1 & 0 & 1 \cr
 1 & 0 & 1 \cr
 0 & 1 & 1 \cr
 0 & 1 & 1 \cr
 1 & 1 & 0 \cr
 1 & 1 & 0 \cr
\hline
\end{tabular}
\hfil
\bigskip
}
}
\end{example}


%
%

Now, given a mapping $E$  of $S^n$,
using a 
 \trio of $E$ for a sequence $P$
 whose sequence of cardinalities is a block-sequence,
 we can improve the result of Section 3 in terms of flexibility.
Indeed, the only constraint is now to have that the sets of vectors having a same final image are grouped after the first bijection according to any block-sequence representing the sequence of cardinalities of those sets. And there is a number of such possible block-sequences (cf. construction of Lemma \ref{lm:partition}).

\begin{theorem} \label{th:4n}
Every mapping $E$ on $\{0,1\}^n$ is computed by an in situ program of length
$4n-3$ and signature $1\dots n\dots 1\dots n\dots 1$ the following way:

 \begin{itemize}

\item Consider a \trio $(F,I,G)$ of $E$ with
$P=(P_0,P_1,\dots,$ $P_{2^n-1})$ 
 such that $[\gl P_0\gl,\gl P_1\gl,\dots,\gl P_{2^n-1}\gl]$ is a block-sequence (built by Lemma \ref{lm:partition})
 
\item Use Theorem~\ref{th:bijective} to compute $G$ and $F$ by programs with signature $1\dots n\dots 1$. 

\item Call $B$ the in situ program formed by the $n$ first assignments of the program of $F$, and
use Proposition~\ref{prop:compose} to compute $B\circ I$  by a program with signature~$1,\dots, n$.

\item Reduce into one assignment the consecutive assignments operating on the same component.

\end{itemize}

In terms of MIN, we get a routing of 
$\R\gl\B\gl\R\gl\B$  performing $E$,
and a multicast routing of $\B\gl\R\gl\B\gl\R$ performing $E^{-1}$.
\end{theorem}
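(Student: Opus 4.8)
The plan is to assemble the four ingredients listed in the statement and then to verify the length and signature by bookkeeping the concatenation; the genuinely new content is the splitting of $F$'s program together with the appeal to Proposition~\ref{prop:compose}.

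First I would check that a suitable \trio exists. The non-empty fibres $E^{-1}(y)$ partition $\{0,1\}^n$, and their cardinalities, padded with zeros to a list of $2^n$ non-negative integers, sum to $2^n$. By Lemma~\ref{lm:partition} this list reorders into a block-sequence; reading off that ordering and attaching each non-zero entry to the corresponding fibre yields a partition-sequence $P=(P_0,\dots,P_{2^n-1})$ whose cardinality sequence $[\gl P_0\gl,\dots,\gl P_{2^n-1}\gl]$ is a block-sequence. Definition~\ref{def:decomp} then provides a \trio $(F,I,G)$ with $E=F\circ I\circ G$, and Lemma~\ref{lm:suffix} ensures that the middle mapping $I=I_P$ is suffix-compatible (note that here $P$ may contain empty sets, so $I$ need not be distance-compatible, which is exactly why Proposition~\ref{prop:compose} rather than Proposition~\ref{prop:direct} is used).

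Next I would produce the three blocks of assignments. By Theorem~\ref{th:bijective}, both $G$ and $F$ admit programs of signature $1\dots n\dots 1$ and length $2n-1$. I split the program of $F$ at its peak: its first $n$ assignments (signature $1,\dots,n$) compute a mapping $B$, the remaining $n-1$ assignments (signature $n-1,\dots,1$) compute a mapping $C$, so $F=C\circ B$. Since $F$ is bijective and $\{0,1\}^n$ is finite, $B$ is bijective as well ($B(x)=B(x')$ forces $F(x)=F(x')$, hence $x=x'$), which legitimates the use of Proposition~\ref{prop:compose} as indicated in the remark following it. Writing $E=C\circ(B\circ I)\circ G$ and applying Proposition~\ref{prop:compose} to the suffix-compatible $I$ and the program $b_1,\dots,b_n$ of $B$ gives a program of signature $1,\dots,n$ and length $n$ computing $B\circ I$.

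Finally I would concatenate, in program order, the program for $G$, the program for $B\circ I$, and the program for $C$. Their signatures juxtapose as $(1\dots n\dots 1)(1\dots n)(n-1\dots 1)$, of total length $(2n-1)+n+(n-1)=4n-2$. The only two consecutive assignments on the same component occur at the $G$--$(B\circ I)$ junction, where the trailing $1$ of $G$ meets the leading $1$ of $B\circ I$; reducing them to one assignment yields length $4n-3$ and signature $1\dots n\dots 1\dots n\dots 1$. The network reading is then immediate: this signature routes $\R\gl\B\gl\R\gl\B$, and reversing all arcs gives a multicast routing of $\B\gl\R\gl\B\gl\R$ performing $E^{-1}$. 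The main obstacle is already discharged by Proposition~\ref{prop:compose} and Lemma~\ref{lm:suffix}; what remains to watch is that $B$ is genuinely bijective and that exactly one index-merge occurs, so the count lands on $4n-3$ and not on $4n-2$ or the weaker $5n-4$ of Theorem~\ref{cor:5n}.
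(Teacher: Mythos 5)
Your proposal is correct and follows essentially the same route as the paper's proof: choose a \trio whose cardinality sequence is a block-sequence via Lemma~\ref{lm:partition}, invoke Lemma~\ref{lm:suffix} for suffix-compatibility, split $F$'s program after its first $n$ assignments to define $B$, compute $B\circ I$ by Proposition~\ref{prop:compose}, and merge the single junction on component $1$ to reach length $4n-3$. Your additional checks (bijectivity of $B$, explicit accounting showing exactly one index-merge) are correct and merely make explicit what the paper leaves implicit.
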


\begin{proof}
Let  $(F,I,G)$ be a \trio of $E$ for a sequence
 $P=(P_0,P_1,\dots,$ $P_{2^n-1})$ 
 such that $[\gl P_0\gl,\gl P_1\gl,\dots,\gl P_{2^n-1}\gl]$ is a block-sequence
 (it exists thanks to Lemma \ref{lm:partition}).
By Theorem~\ref{th:bijective}, $G$ (resp. $F$) can be computed by a program of signature $1\dots n\dots 1$ (resp. $1\dots n\dots 1$).
By Lemma \ref{lm:suffix},
the mapping $I=I_P$ on $\{0,1\}^n$
is suffix-compatible.
Call $B$ the mapping computed by the $n$ first assignments $b_1,...,b_n$ of the program of $F$.
By Proposition~\ref{prop:compose}, $B\circ I$ is also computed by a program of signature $1\dots n$. Then, by composition and by reducing two successive assignments of the same variable in one, the mapping $E=F\circ I\circ G$ is computed by a sequence of $4n-3$ assignments of signature $1\dots n\dots 1\dots n\dots 1$.
\end{proof}



\begin{figure}[htbh]
\centerline{\includegraphics[scale=1.6]{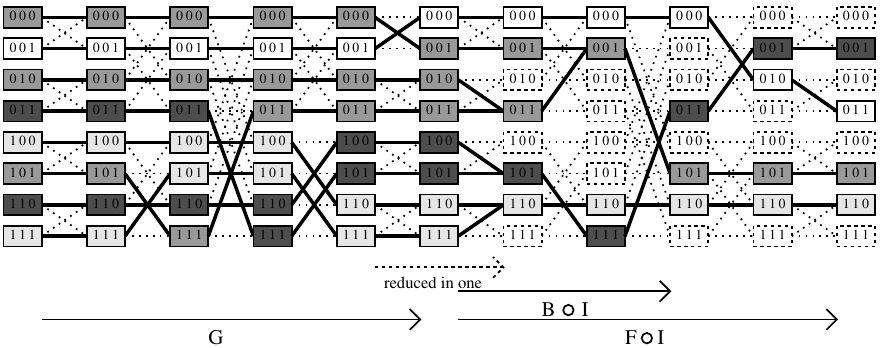}}
\centerline{ \small{ \bf Figure 3.}}
\end{figure}


\begin{example}
{\rm
Figure 3 
gives an example for the construction of this section (on the same mapping as for Figure 2).
The elements of $\{0,1\}^3$ are grouped by the bijection $G$ at stage 6, accordingly with the block-sequence $\bigl[[1,3],[2,2]\bigr]$ induced by $E^{-1}$.
Then, at stage 9 all elements with same final image have been given a same image by $B\circ I$ (which is the mapping detailed in Example \ref{ex:compose}), where $B$ is the first part of the bijection $F$. At last, the second part of the bijection $F$ allows to finalize the mapping $E$.
 Observe that we could have chosen other block-sequences, such as
$\bigl[[3,1],[2,2]\bigr]$ or 
$\bigl[[2,2],[1,3]\bigr]$ for instance, and we would have obtained other in situ programs and routing patterns for the same mapping (see Remark \ref{flexible}).  
}
\end{example}

%


\begin{open problem}
\label{open:4n}
{\rm
A first natural question was to extend the notion of block-sequence and Lemma \ref{lm:partition} to a set $S$ of arbitrary size,  in such a way that the rest of the construction remains valid. An efficient answer has been given recently in \cite{Ga1}, as mentioned in the introduction of this Section. 
Another general and probably demanding question is to study the reach of the flexibility provided by the present construction: either to get general in situ programs of shorter length, as requested by Open problem \ref{open:length}; or, in terms of networks, to get efficient (wide-sense) non-blocking routing methods, i.e. methods to update  dynamically  the routing with respect to updates of the computed mapping. Finding such dynamic non-blocking routing strategies is a main concern of the network field (see \cite{Hw04} for a general background).%
}
\end{open problem}


%
%


\section{Linear mappings on suitable ring powers}
\label{sec:linear}

In this section, we assume that $S$ is given with an algebraic structure:
$S$ is a (non-necessarily finite) quotient of an
Euclidean domain $R$ by an ideal $I$. Classical examples for $S$ are: any
field
(the result of this section for $S$ being a field is easier, since most
technicalities can be skipped), the rings $\bZ/s \bZ$, or ${K[x]}/{(P)}$
for some polynomial $P$ with coefficients in a field $K$.  

Given $S$ and an integer $n$, we consider a \emph{linear} mapping $S^n\rightarrow S^n$,
that is an application from $S^n$ to $S^n$ which is
a linear application with respect to the canonical structure of
$S$-module of $S^n$. 
The results of Section~3 show that $O(n)$
assignments are sufficient to compute such a mapping.  Here, we achieve a
stronger result: the number of required mappings is bounded by $2n-1$,
\emph{and} all intermediary assignments are linear.

In \cite{BuMo04bis}, a similar result is obtained in the particular case
of linear boolean mappings. The paper \cite{RaBo91} achieves this result with an
on-the-fly self routing strategy, again restricted to the linear boolean
case. We note that the more general result we obtain here is not of this
efficient nature, since the actual \emph{computation} of the
decomposition we obtain has a complexity which is of the order of
$O(n^3)$. We insist however on the fact that the in situ decomposition we
provide can be proven to be of length at most $2n-1$, and is available for much more general useful algebraic  structures.

Also, finding an in situ program of a linear mapping using linear assignments
is equivalent to rewriting a matrix as a product of \emph{assignment
matrices} (matrices equal to the identity matrix except on one row).
Theorem~\ref{th:linear} below is proven using this alternate formalism.

We denote $S^*$ the set of invertible elements of $S$.
For convenience we also define the Kronecker symbol $\delta_i^j$, which is
defined for two integers $i$ and $j$ as being $1$ for $i=j$, and $0$
otherwise.

\begin{lemma}
\label{lem:gcdthingy}
Let $x_1,\ldots,x_n$ be coprime elements of $R$. Let $i_0\in[1\ldots n]$.
There
exists multipliers $\lambda_1,\ldots,\lambda_n$ such that
$\lambda_{i_0}=1$, and $\sum_i\lambda_ix_i\in S^*$.%
\end{lemma}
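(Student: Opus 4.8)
The plan is to translate invertibility in $S$ into a coprimality condition in $R$, and then to build the multipliers prime by prime. Since $R$ is Euclidean it is a principal ideal domain, hence a unique factorisation domain; I would write $I=(d)$. The key elementary fact is that the image $\bar a$ of $a\in R$ lies in $S^*$ exactly when $(a)+I=R$, i.e.\ when $a$ and $d$ share no common prime factor. (When $d=0$, that is $S=R$, this just says $a\in R^*$; this degenerate case is only relevant when $R$ is itself a field, the ``easy case'' alluded to in the text, and is handled by the same reasoning below with the single forbidden value $a=0$.) So, writing $s=\sum_i\lambda_i x_i$, the whole statement reduces to: choose $\lambda_{i_0}=1$ and the remaining $\lambda_i$ so that $s$ is coprime to $d$.

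First I would collapse all indices other than $i_0$ into a single term. Let $c=\gcd(x_i : i\neq i_0)$; by B\'ezout in $R$ there are $\mu_i$ (for $i\neq i_0$) with $\sum_{i\neq i_0}\mu_i x_i=c$, and the joint coprimality of $x_1,\dots,x_n$ gives $\gcd(x_{i_0},c)=1$. Restricting the search to multipliers of the special shape $\lambda_i=\lambda\,\mu_i$ for a single scalar $\lambda\in R$ then reduces everything to the two-element problem: find $\lambda\in R$ such that $x_{i_0}+\lambda c$ is coprime to $d$.

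This last problem I would solve by prime avoidance combined with the Chinese Remainder Theorem. Let $p_1,\dots,p_r$ be the distinct prime divisors of $d$. For each $j$ there is a nonempty set of admissible residues $\lambda \bmod p_j$: if $p_j\nmid c$, then $c$ is invertible modulo $p_j$ and the requirement $x_{i_0}+\lambda c\not\equiv 0 \pmod{p_j}$ excludes exactly one residue class, so since the residue field $R/(p_j)$ has at least two elements an admissible class survives; if $p_j\mid c$, then $\gcd(x_{i_0},c)=1$ forces $p_j\nmid x_{i_0}$, whence $x_{i_0}+\lambda c\equiv x_{i_0}\not\equiv 0 \pmod{p_j}$ for \emph{every} $\lambda$. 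Picking one admissible residue at each $p_j$ and recombining them by the Chinese Remainder Theorem (the $p_j$ being pairwise non-associate) produces a $\lambda\in R$ with $p_j\nmid x_{i_0}+\lambda c$ for all $j$, i.e.\ with $x_{i_0}+\lambda c$ coprime to $d$. Setting $\lambda_{i_0}=1$ and $\lambda_i=\lambda\,\mu_i$ for $i\neq i_0$ finishes the proof, since then $\sum_i\lambda_i x_i=x_{i_0}+\lambda c$ has invertible image in $S$.

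The main obstacle is making a \emph{single} choice of the $\lambda_i$ succeed simultaneously at all prime divisors of $d$ while respecting the global normalisation $\lambda_{i_0}=1$. The gcd/B\'ezout collapse of the non-$i_0$ indices into one scalar $\lambda$ is precisely what lets the Chinese Remainder Theorem decouple the primes, and the case split on whether $p_j\mid c$ (which uses $\gcd(x_{i_0},c)=1$ in an essential way) is what guarantees local solvability. The only delicate point is a very small residue field, but a field always has at least two elements, so at most one residue is ever forbidden and a valid choice always remains.
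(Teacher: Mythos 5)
Your proof is correct, and it reaches the conclusion by a genuinely different reduction than the paper's, even though both arguments rest on a per-prime case analysis glued together by the Chinese Remainder Theorem. The paper applies CRT to the \emph{modulus} first: it suffices to choose the multipliers modulo each prime power $p^v$ dividing $I$, and locally the choice is purely combinatorial --- if $p\nmid x_{i_0}$ take $\lambda_i=\delta_i^{i_0}$; otherwise coprimality supplies an index $i_1$ with $p\nmid x_{i_1}$ and one takes $\lambda_i=\delta_i^{i_0}+\delta_i^{i_1}$. So all local multipliers are $0$ or $1$ and no B\'ezout identity is ever invoked. You instead collapse all coordinates other than $i_0$ into the single term $\lambda c$ with $c=\gcd(x_i : i\neq i_0)$ via B\'ezout, and then solve the resulting one-parameter problem (make $x_{i_0}+\lambda c$ coprime to $d$) by prime avoidance plus CRT applied to the scalar $\lambda$. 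The paper's route is more elementary: no extended gcd computation, and no appeal to residue fields having at least two elements (its substitute for that observation is the existence of the coprime $x_{i_1}$). Your route buys three things: the normalisation $\lambda_{i_0}=1$ holds exactly by construction, whereas in the paper it holds modulo each prime power and one must note that adjusting $\lambda_{i_0}$ by an element of $I$ does not change the class of $\sum_i\lambda_i x_i$ in $S$; the use of the coprimality hypothesis is isolated in the single identity $\gcd(x_{i_0},c)=1$; and the two-element statement you prove along the way is the classical lemma underlying unimodular completion. One shared caveat: both proofs implicitly require $I\neq 0$ (the paper factors $I$ into prime powers), and your explicit remark that the degenerate case $I=0$ is only sensible when $R$ is a field is apt, since otherwise the statement is actually false --- for $R=\mathbb{Z}$, $I=0$, $(x_1,x_2)=(2,5)$, $i_0=1$, no integer $\lambda_2$ makes $2+5\lambda_2$ a unit of $\mathbb{Z}$.
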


\begin{proof} 
By assumption, the index $i_0$ is fixed. Without loss of generality we
may safely assume that $i_0=1$.  In virtue of the Chinese Remainder
Theorem, it suffices to define the multipliers
$\lambda_1,\ldots,\lambda_n$ separately modulo each prime power $p^v$
dividing $I$, thus it is also valid to restrict to the case where the
ideal $I$ is generated by a prime power $p^v$.

Now we distinguish two cases. In the first case, $x_1$ is coprime to $p$,
and thus coprime to $I$. We thus choose $\lambda_1=1$, and 
$\lambda_i=0$ for all $i>1$. Then $\lambda_1x_1=x_1$ is in $(R/I)^*$. In
the second case, $x_1$ is divisible by $p$. Since by assumption, the
$x_i$'s are coprime, there exists an integer $i_1$ such that
$x_{i_1}$ is coprime to
$p$. Therefore $x_{i_0}+x_{i_1}$ is coprime to $p$, hence we may set
$\lambda_1=\lambda_{i_1}=1$, and $\lambda_i=0$ for all other indices $i$
(in other words, we may write $\lambda_i=\delta_i^1+\delta_i^{i_0}$).
We thus have  $\sum_i\lambda_ix_i=x_{i_0}+x_{i_1}$, which is coprime to
$p$, whence in $(R/I)^*$ as well.
%
\end{proof}

\begin{corollary}
\label{lem:gcdthingy-cor}
Let $x_1,\ldots,x_n$ be elements of $R$, and $g=\gcd(x_1,\ldots,x_n)$. Let $i_0\in[1\ldots n]$.
There
exists multipliers $\lambda_1,\ldots,\lambda_n$ such that
$\lambda_{i_0}=1$, and $\sum_i\lambda_ix_i\in gS^*$.%
\end{corollary}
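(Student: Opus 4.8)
The plan is to reduce the statement directly to Lemma~\ref{lem:gcdthingy} by dividing out the common factor $g$. If all the $x_i$ vanish, then $g=0$, the set $gS^*$ equals $\{0\}$, and the claim holds trivially with $\lambda_{i_0}=1$ and the remaining $\lambda_i$ arbitrary, since then $\sum_i\lambda_ix_i=0$. So I may assume $g\neq 0$ from now on.

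Since $R$ is Euclidean, hence a GCD domain, the greatest common divisor $g=\gcd(x_1,\ldots,x_n)$ divides each $x_i$, so I can set $x_i'=x_i/g\in R$ for every $i$. The elementary fact I would invoke is that $\gcd(x_1',\ldots,x_n')$ is a unit of $R$, i.e.\ the elements $x_1',\ldots,x_n'$ are coprime; this is the standard property that dividing a family by its gcd produces a coprime family in a GCD domain. I would then apply Lemma~\ref{lem:gcdthingy} to the coprime family $x_1',\ldots,x_n'$ and to the same index $i_0$, obtaining multipliers $\lambda_1,\ldots,\lambda_n$ with $\lambda_{i_0}=1$ and $\sum_i\lambda_ix_i'\in S^*$.

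Multiplying back by $g$ then gives
$$\sum_i\lambda_ix_i=g\sum_i\lambda_ix_i',$$
and since $u:=\sum_i\lambda_ix_i'$ is invertible in $S$, the element $gu$ lies in $gS^*$ by definition. The same multipliers $\lambda_i$ therefore satisfy the conclusion of the Corollary, and the proof is complete.

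The one point that deserves care---and is really the only obstacle, the rest being immediate---is the interface between division carried out in $R$ and invertibility tested in $S=R/I$: the quotients $x_i'=x_i/g$ are genuine elements of $R$ (because $g\mid x_i$ in $R$), so the passage to $S$ is performed only at the very end, when invoking $S^*$. One should also note that $gS^*$ is well defined independently of the chosen representative of $g$, since associate choices of $g$ differ by a unit of $R$, which maps to a unit of $S$; this makes the target set $gS^*$ unambiguous.
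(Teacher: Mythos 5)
Your proof is correct and follows exactly the paper's argument: the paper also proves this corollary as a direct application of Lemma~\ref{lem:gcdthingy} to $(x_1/g,\ldots,x_n/g)$, merely stated more tersely. Your added care about the $g=0$ case and the well-definedness of $gS^*$ is sound but not part of the paper's one-line proof.
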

\begin{proof}
This is a trivial application of Lemma~\ref{lem:gcdthingy} to
$(x_1/g,\ldots,x_n/g)$.
\end{proof}





\begin{theorem} \label{th:linear}
Every linear mapping $E$ on $S^n$ is computed by an in situ program
of length $2n-1$ and signature $1,2,...,n,{n-1},..., 1$ made of linear assignments.

Furthermore, if $E$ is bijective, then the inverse mapping $E^{-1}$ is
computed by the in situ program defined by the same sequence of assignments in reversed order together with the following transformation:

 \noindent $\Bigl[x_i:=a\cdot x_i + f(x_1,\ldots,x_{i-1},x_{i+1},\ldots,x_n)\Bigr]$ 
 
 \hfill $\mapsto\ \ \ \ \Bigl[x_i:=a^{-1}\cdot\bigl(x_i-f(x_1,\ldots,x_{i-1},x_{i+1},\ldots,x_n)\bigr)\Bigr].$
\end{theorem}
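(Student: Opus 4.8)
We must show every linear mapping $E$ on $S^n$ admits an in situ program of length $2n-1$ with signature $1,2,\ldots,n,n-1,\ldots,1$, using only linear assignments, and that for bijective $E$ the reversed-and-inverted sequence computes $E^{-1}$. As the authors indicate, I will work in the matrix formalism: a linear assignment $x_i := \sum_j a_{ij}x_j$ corresponds to left-multiplication by an \emph{assignment matrix} $M$ (identity except possibly on row $i$). The statement then becomes a matrix factorization: the matrix $A$ of $E$ factors as a product of $2n-1$ assignment matrices in the prescribed row order $1,2,\ldots,n,n-1,\ldots,1$.

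**Plan of the factorization.** Let me think about this carefully.The plan is to recast the statement as a matrix factorization and prove it by induction on $n$, invoking Lemma~\ref{lem:gcdthingy} at the single ring-theoretic step where a field argument would be automatic. A linear assignment $x_i := \sum_j a_{ij}x_j$ is multiplication by an \emph{assignment matrix} $M$ equal to the identity except on row $i$, and $\det M = a_{ii}$ (its self-coefficient). An in situ program of signature $1,2,\ldots,n,n-1,\ldots,1$ made of linear assignments is therefore exactly a factorization of the matrix $A$ of $E$ into $2n-1$ assignment matrices whose modified-row sequence is the palindrome $1,2,\ldots,n,\ldots,2,1$. To make the induction close I would prove the slightly stronger statement in which the map may read, but never write, some extra fixed variables (equivalently, $A$ is only required to act on coordinates $2,\ldots,n$ while treating coordinate $1$ as a read-only parameter); those extra variables simply ride along as columns that are never used as pivots.

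The induction peels coordinate $1$. I would introduce the first assignment $U_1$ (the up-sweep on row $1$), overwriting $x_1$ by a linear form $w$, and the last assignment $D_1$ (the down-sweep on row $1$), producing the final value $y_1$; what sits between them is an in situ program of signature $2,\ldots,n,\ldots,2$ and length $2n-3$ on coordinates $2,\ldots,n$, with $x_1=w$ now read-only, supplied by the inductive hypothesis. For this to work the middle must output the correct $y_2,\ldots,y_n$ from $(w,x_2,\ldots,x_n)$, and $D_1$ must recover $y_1$ from $(w,y_2,\ldots,y_n)$. Writing each target row as $r_i = A_{i1}x_1+\rho_i$ with $\rho_i$ free of $x_1$, and $w = c\,x_1+\omega$, the first requirement amounts to $c \mid A_{i1}$ for $i\geq 2$, so I would set $c = g := \gcd(A_{11},\ldots,A_{n1})$ (the all-zero first column being handled directly by taking $w := r_1$).

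The hard part, and the only place the ring structure really bites, is the second requirement: $r_1$ must lie in the $S$-span of $w,r_2,\ldots,r_n$ and the parameters. Over a field any nonzero pivot makes that span large enough, but over a quotient of a Euclidean domain it is a genuine module-membership condition. Unwinding it, after dividing the first column by $g$ one needs multipliers $\beta_i$ for which $\alpha := q_1 - \sum_{i\geq 2}\beta_i q_i$ is a unit, where the $q_i = A_{i1}/g$ are coprime; one then solves for the remaining part $\omega$ of $w$ by dividing by $\alpha$. This is precisely Lemma~\ref{lem:gcdthingy} (equivalently Corollary~\ref{lem:gcdthingy-cor}) applied to $q_1,\ldots,q_n$ with pinned index $i_0=1$: it returns $\lambda_i$ with $\lambda_1=1$ and $\sum_i\lambda_i q_i\in S^*$, and taking $\beta_i=-\lambda_i$ completes the choice of $w$. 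I expect this to demand the most care, and it is the reason the theorem holds over rings and not only over fields.

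Finally, for the inverse statement I would argue by determinants. The $2n-1$ assignment matrices have determinants multiplying to $\det A$, so if $E$ is bijective then $\det A\in S^*$, and since a product of ring elements is a unit only if each factor is, every self-coefficient $a$ is invertible. Hence each assignment $x_i := a\,x_i+f$ is individually invertible, with inverse $x_i := a^{-1}(x_i-f)$, and composing these inverses in reverse order computes $E^{-1}$ — exactly the transformation stated in Theorem~\ref{th:linear}.
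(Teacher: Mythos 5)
Your proposal is correct and takes essentially the same route as the paper's proof: an induction that peels one coordinate per level (the paper phrases it as rewriting $M = L_k M' R_k$ with the first $k$ rows of $M'$ equal to identity rows, which is exactly your strengthened statement with read-only variables), with Lemma~\ref{lem:gcdthingy} invoked with a pinned index to manufacture a unit pivot, and with the inverse statement obtained from the same determinant argument. The only difference is cosmetic: the paper's up-sweep assignment carries the pivot $g\cdot(\mathrm{unit})$ (via premultiplication by $T$) and its down-sweep $L_k=T^{-1}$ has pivot $1$, whereas you put exactly $g$ in the up-sweep and the unit $\alpha$ in the down-sweep.
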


\begin{proof}
The proof proceeds by induction. Let $k$ be an integer, and let $M$ be a matrix representing a
linear mapping $E$ on $S^n$  which leaves the first $k-1$ variables
unchanged.
In other words, the first $k-1$ rows of $M$ equal those of the identity matrix.
The matrix which defines the input linear mapping $E$
satisfies this property with $k=1$, thus our induction initiates at $k=1$
with the matrix defining $E$.
We explore
the possibility of rewriting $M$ as a product $L_kM'R_k$, where the first
$k$ rows of $M'$ match those of the identity matrix. 

Let $g$ be the greatest common divisor of (arbitrary representatives in
$R$ of) the coefficients of column $k$ in $M$.
A favourable situation is when $m_{k,k}$ is in $gS^*$.  Should this not be
the case, let us see how we can transform the matrix to reach this
situation unconditionally.  Assume then for a moment that
$m_{k,k}\notin gS^*$.  Lemma~\ref{lem:gcdthingy} gives
multipliers $\lambda_1, \ldots, \lambda_n$ such that
$\sum_\ell\lambda_\ell m_{\ell,k}\in gS^*$, with the additional
constraint that $\lambda_{k}=1$. Let us now denote by $T$ the $n\times n$
matrix which differs from the identity matrix only at row $k$, and whose
coefficients in row number $k$ are defined by
$t_{k,j}=\lambda_j$.  Clearly $T$ is an invertible assignment matrix, and
the product $TM$ has a coefficient at position $(k,k)$ which is in
$gS^*$.

Now assume $m_{k,k}\in gS^*$. Let $G$ be the diagonal matrix having
$G_{k,k}=g$ as the only diagonal entry not equal to $1$.  Let
$M''=MG^{-1}$ ($M''$ has coefficients in $R$ because $g$ is the g.c.d. of
column $k$). We have $m''_{k,k}\in S^*$.  We form an assignment matrix $U$
which differs from the identity matrix only at row $k$, and whose
coefficients in row number $k$ are exactly the coefficients of the $k$-th
row of $M''$.
The matrix $U$ is then an invertible assignment matrix (its determinant is
$m''_{k,k}$). The $k$ first rows of the matrix $M'=M''U^{-1}$ match the
$k$ first rows of $I_n$, and we have $M=T^{-1}\times M'\times (UG)$. Our
goal is therefore reached with $L_k=T^{-1}$ and $R_k=UG$.

Repeating the procedure, our input matrix is rewritten as a product
$$L_1L_2\ldots L_{n-1}R_n\ldots R_1,$$ where all matrices are assignment
matrices. No left multiplier $L_n$
is needed for the last step, since the g.c.d. of one single element is
equal to the element itself.
Finally,
the determinant of $M$ is invertible if and only if
all the matrices $R_k$ are invertible, 
hence the reversibility for bijective mappings.%
\end{proof}

\begin{corollary} \label{cor:matrix}
Every square matrix of size $n$ on $S$ (quotient of an
Euclidean domain by an ideal) is the product of $2n-1$ 
assignment matrices (equal to the identity matrix except on one row).
%
%
\end{corollary}

\begin{remark}
{\rm
We digress briefly on the computational complexity of \emph{building} the
in situ programs for the linear mappings considered here. The matrix
operations performed here all have complexity $O(n^2)$ because of the
special shape of the assignment matrices. Therefore, the overall
computational complexity of the decomposition is $O(n^3)$.
}
\end{remark}

\renewcommand{\arraystretch}{1}

\def\mtwo#1#2#3#4{\ensuremath{\left(\begin{array}{rr}#1& #2\\#3
&#4\end{array}\right)}}
\begin{example}
{\rm
The procedure in the proof of Theorem \ref{th:linear} can be illustrated by a small example. Assume we
want to decompose the mapping in $\bZ/12\bZ$ given by the matrix
$$E=\mtwo4564.$$
Let $M$ be this matrix. The g.c.d of column $1$ in $M$ is $\gcd(4,6)=2$,
which is not invertible modulo $12$.
We therefore firstly use Corollary~\ref{lem:gcdthingy-cor} to make $\gcd(4,6)=2$
appear in the top left coefficient. We use the relation
\begin{align*}
1*(4/2)+1*(6/2)&=5\in({\bZ}/{12\bZ})^*,\\
1*4+1*6&=10\in2({\bZ}/{12\bZ})^*.
\end{align*}
We take therefore $\lambda_1=\lambda_2=1$ and the matrix $T=\mtwo1101$ gives
$$M'=TM=\mtwo1101\mtwo4564=\mtwo{10}964.$$
The common divisor $2$ of column $1$ can then be set aside. Let $G=\mtwo2001$. We have
$M''=M'G^{-1}=TMG^{-1}=\mtwo5934$. Now let $U=\mtwo5901$ reproduce
the first row of $M''$. We have $M''U^{-1}\equiv\mtwo1031$, exploiting
the fact that because $5$ is invertible modulo $12$, $U$ is an
invertible matrix modulo $12$. This
eventually unfolds as the following factorization of $M$:
\begin{gather*}
\mtwo1101M =\mtwo5934\mtwo2001
\equiv\mtwo1031\mtwo5901\mtwo2001.\\
M =\mtwo1{-1}01\mtwo1031\mtwo{10}901.
\end{gather*}

\bigskip


This corresponds to the following sequence of assignments:
\vspace{-3mm}

\begin{align*}
x_1 &:= 10x_1 + 9x_2; & x_2 &:= 3x_1 + x_2; & x_1 &:= x_1 - x_2.
\end{align*}

}
\end{example}

\smallskip

\begin{open problem}
\label{open:alg}
{\rm
As mentioned in the introduction, natural subsequent questions are the following. Let $S=\{0,1\}$ be the binary field, and $E:S^n\rightarrow S^n$ be a mapping the components of which are polynomials of degree at most~$k$.
\begin{enumerate}
\item Does there always exist an in situ program of $E$ the assignments of which are polynomials of degree at most $k$?
\item If so, what is the maximum (over all such $E$) of the minimum (over all such in situ programs of $E$) number of assignments in the program? 
\end{enumerate}

The following example tested on a computer shows that, in contrast to the linear case or the boolean bijective case developed earlier in the paper, this bound has to be strictly larger than $2n-1$. For the mapping
$$E\ :\ (x_1,x_2,x_3)\ \longrightarrow\ 
(\ x_2 x_3,\ x_1 x_3,\ x_1 x_2\ ),$$
the following in situ program of $E$ has the smallest possible number of assignments using degree-2 polynomials:

\hskip 3cm
\vbox{
$x_1 := x_2 + x_2 x_3 + x_1;$

$x_2 := x_3 + x_1 + x_2;$

$x_3 := x_3 + x_2 + x_1 x_2;$

$x_1 := x_3 + x_2 x_3 + x_1 x_3;$

$x_2 := x_3 + x_2 x_3 + x_1 x_3;$

$x_3 := x_3 + x_2 x_3 + x_1 x_3.$
}
}
\end{open problem}

\bigskip

\noindent{\bf Conclusion.}
To conclude, further work can consist in applications: for instance the context of
computations modulo e.g. $\bZ/2^{64}\bZ$ is close to the concern of
integer arithmetic with machine words. Theorem~\ref{th:linear} shows that we can obtain a short sequence for computing linear mappings on such data.
About the in situ approach of this computation, a  question may arise, though, as to whether the constants appearing
 in the computation defeat the claim that the computation avoids
 temporaries. In fact,  such constants can be directly embodied in the
 code, so that they contribute exclusively to the code size and not to its
 requested variable data size (or number of registers). 
Further work can also consist in improving bounds: for instance, it has been claimed very recently in \cite{Ga2} that the tight bound is $\lfloor 3.n/2\rfloor$ linear assignments to compute linear mappings when $S$ is the field $\bZ/q\bZ$ for a prime power $q$ (see also Open problem \ref{open:length}).
Further work can also consist in dynamic routing for a network approcah (see \cite{Hw04} and
Open problem~\ref{open:4n}), or  in  algebraic generalizations (such as proposed by Open problem~\ref{open:alg})...
\bigskip

\noindent {\bf Acknowledgements.}
We are grateful to Jean-Paul Delahaye  for presenting our in situ approach of computation in \cite{De11} and giving to us the idea of Proposition~\ref{permut}.
Also, we are grateful to several anonymous referees for useful comments,  
 suggestions and references, that helped notably searching the network literature.





%




\end{document}